\newcommand{\blind}{0}
\begin{document}

\def\spacingset#1{\renewcommand{\baselinestretch}%
{#1}\small\normalsize} \spacingset{1}

\newcommand\numberthis{\addtocounter{equation}{1}\tag{\theequation}}

%%%%%%%%%%%%%%%%%%%%%%%%%%%%%%%%%%%%%%%%%%%%%%%%%%%%%%%%%%%%%%%%%%%%%%%%%%%%%%

\if0\blind
{
  \title{\bf The Network Structure of Unequal Diffusion}
  \author{
    Eaman Jahani
    \thanks{Authors would like to thank Matthew Jackson for his valuable comments and feedback and Tiago Peixoto for his helpful insights and references.} \\
    Department of Statistics, University of California, Berkeley
    \vspace{0.5em} \\
    Dean Eckles \\
    MIT Sloan School of Management
    \vspace{0.5em} \\
    Alex ``Sandy'' Pentland \\
    MIT Institute for Data, Systems and Society}
  \date{September 15, 2022}
  \maketitle
} \fi

\if1\blind
{
  \bigskip
  \bigskip
  \bigskip
  \begin{center}
    {\LARGE\bf The Network Structure of Unequal Diffusion}
\end{center}
  \medskip
} \fi

\bigskip
\begin{abstract}
Social networks affect the diffusion of information, and thus have the potential to reduce or amplify inequality in access to opportunity.
We show empirically that social networks often exhibit a much larger potential for unequal diffusion across groups along paths of length 2 and 3 than expected by our random graph models. We argue that homophily alone cannot not fully explain the extent of unequal diffusion and attribute this mismatch to unequal distribution of cross-group links among the nodes. Based on this insight, we develop a variant of the stochastic block model that incorporates the heterogeneity in cross-group linking. The model provides an unbiased and consistent estimate of assortativity or homophily on paths of length 2 and provide a more accurate estimate along paths of length 3 than existing models. We characterize the null distribution of its log-likelihood ratio test and argue that the goodness of fit test is valid only when the network is dense. Based on our empirical observations and modeling results, we conclude that the impact of any departure from equal distribution of links to source nodes in the diffusion process is not limited to its first order effects as some nodes will have fewer direct links to the sources.
More importantly, this unequal distribution will also lead to second order effects as the whole group will have fewer diffusion paths to the sources. 
\end{abstract}

\noindent%
{\it Keywords:} Stochastic Block Model, Assortativity, Diffusion Paths, Brokerage, Heterogeneous Edge Propensities
\vfill

\newtheorem{lemma}{Lemma}
\newtheorem{proposition}{Proposition}
\newtheorem{theorem}{Theorem}

\newpage
\spacingset{1.5} % DON'T change the spacing!
\section{Introduction}
\label{sec:intro}
Diffusion of information in social networks determines who gets access to a valueable piece of information, such as a new investment opportunity. The structure of the network plays an important role in which individuals or groups receive the valuable information. Certain network structures are more likely to keep a piece of information exclusive to one group, thus leading to unequal diffusion. For example, if there are very few social links between people of different races, the information about a new employment opportunity that is generated among one race might never reach individuals of the other race \citep{jackson2004}. Many existing network models aim to explain the absence of diffusion from one group to another through assortative mixing \citep{Newman2003}. Assortative mixing, or simply assortativity, captures the bias in forming edges with similar characteristics. It is also referred to as homophily which simply means that attributes of nodes are correlated across the edges. For example, in social networks individuals have a strong tendency to form links with other people who are similar to them in terms of age, language, socioeconomic status or race.

The stochastic block model (SBM) --- along with its variants such as degree-correction \citep{Karrer2011} --- defines  an important class of these models that explicitly account for assortative mixing in networks. SBM is a generative random network model for modeling blocks or groups in networks. It has been widely used in computer science and social sciences to model community structure in networks \citep{rohe2011spectral,holland1983stochastic,anderson1992building,faust1992blockmodels,wasserman1989canonical,wang1987}. In its original form, vertices in a network exclusively belong to one of the $K$ groups (or blocks) in the network. Each pair of vertices form an edge independently of other edges or vertices. Edge formations between any pairs of two groups are independent, identical and solely determined by the group membership of the pair of vertices. If $g_i \in \{1,2,...,K\}$  corresponds to the group of vertex $i$, then a $K \times K$ matrix, $P$, determines the edge formation probabilities between any pair of vertices. The probability of an edge between any pairs $i$ and $j$ is the  $(g_i, g_j)$ element in the matrix, $P_{g_i, g_j}$.

This simple model can produce a variety of interesting network structures. For example, an edge probability matrix in which diagonal entries are much larger than off-diagonal entries produces networks with densely connected groups and sparse connections across groups. The ability to model such community structure is the main reason SBM can capture assortative mixing in a network. This has led to the popularity of SBM as one of the main methods for community detection. SBM does so by generating random networks that match the observed network in terms of the frequency of within-group and cross-group edges. The fitted model matches the observed assortativity or homophily in expectation.

\begin{figure}[t]
\centering
\begin{subfigure}[t]{0.45\textwidth}
\centering
\includegraphics[width=\textwidth]{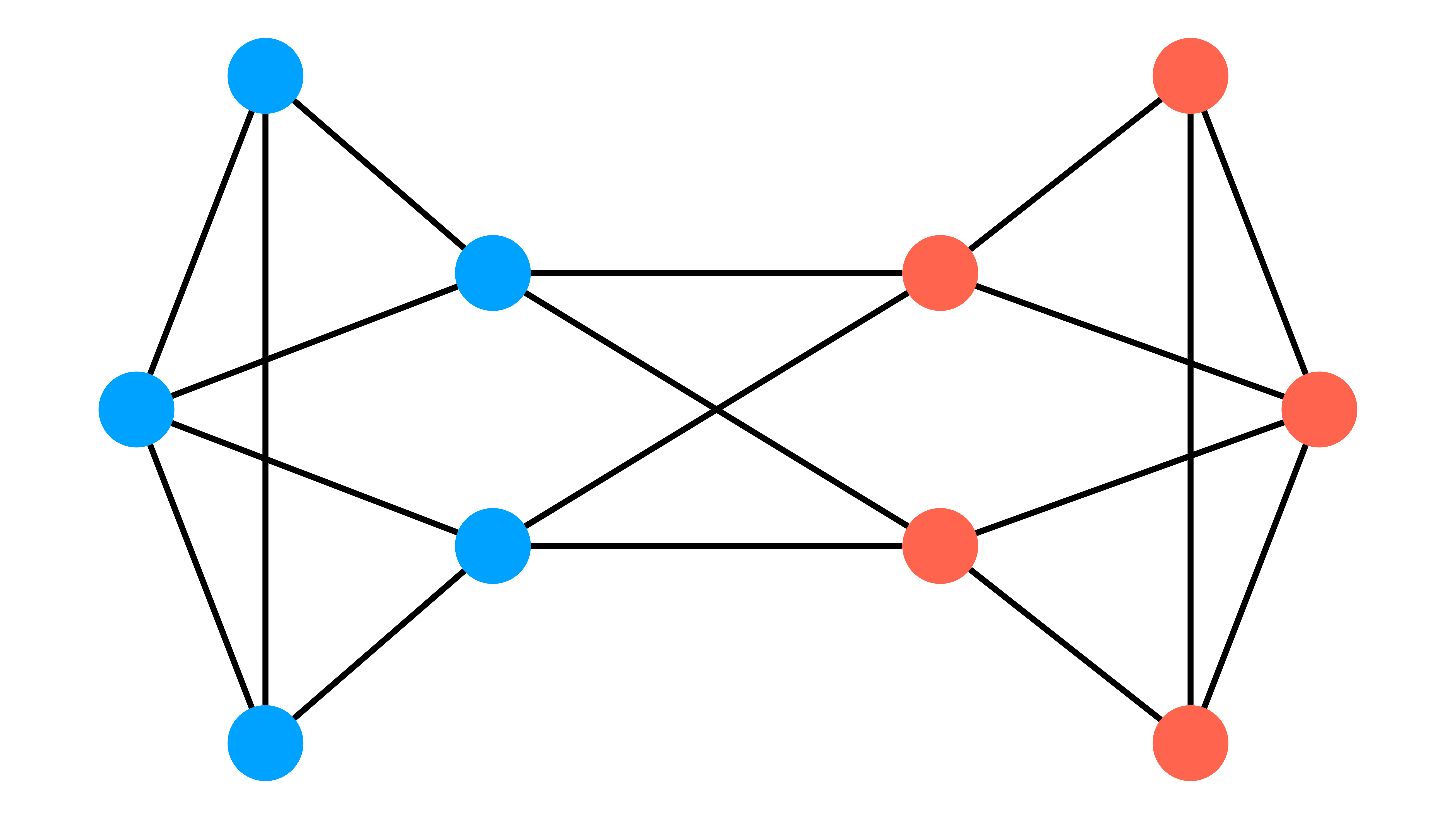}
\caption{}
\label{fig:brokerage_network}
\end{subfigure}
\hfill
\begin{subfigure}[t]{0.45\textwidth}
\centering
\includegraphics[width=\textwidth]{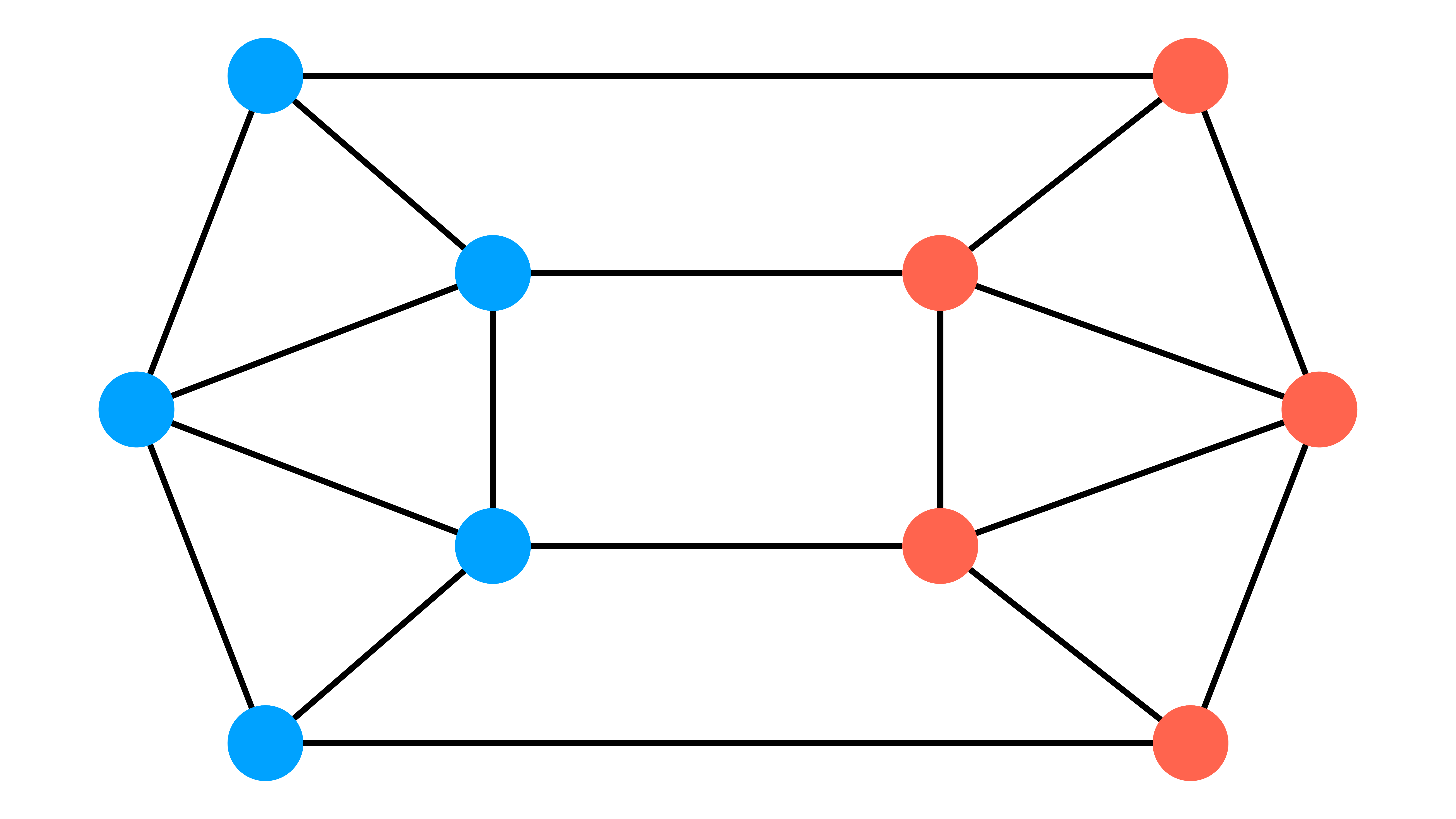}
\caption{}
\label{fig:nonbrokerage_network}
\end{subfigure}
\caption{Comparison of  (\protect\subref{fig:brokerage_network}) a network with brokerage in which a disproportionate fraction of cross-type edges are held by a small number of nodes versus (\protect\subref{fig:nonbrokerage_network}) a similar network in which cross-type edges are distributed more equally. Red and blue nodes correspond to two different groups or blocks. Corresponding nodes have the same degree and the number of cross-type edges are the same in both networks, but there are 10 cross-type paths of length 2 of the form red-blue-blue in network (\protect\subref{fig:nonbrokerage_network}) while there are only 8 such paths in network (\protect\subref{fig:brokerage_network}).}
\label{fig:brokerage_network_example}
\end{figure}

SBM or its degree corrected version assume that within-group and cross-group edges are distributed ``uniformly'' across all pairs: the existence of an edge between any two pairs is identical to other similar pairs. In the case of degree-corrected SBM (DCSBM), after conditioning on degree two nodes are similar in terms of their cross-group edge formation.
%This is clearly captured by the structure of the edge probability matrix, as the edge probability between any pairs from the same two groups is i.i.d and solely depends on group memberships.
In reality, many real networks have heterogeneous propensities in edge formation to various groups. In most cases, social networks exhibit a pattern of brokerage which means cross-group edges are not distributed uniformly, instead a small subgroup of nodes hold a disproportionate level of cross-group edges. \citet{simmel1950} was the first to introduce the concept of network brokerage in triadic relations. \citet{burt2009} later advanced our understanding of brokerage by introducing the concept of ``structural holes'' between two unconnected communities, across which brokers act as intermediary. These broker nodes play an important role in connecting otherwise disconnected communities, moving information between them, and acting as an intermediary for resource exchanges. Due to their unique position in the network, brokers benefit from various types of advantages, for example access to diverse information or opportunities for arbitrage in exchanges. However, these advantages to brokers might lead to some  costs to other actors in the network or the network as a whole.

Figure \ref{fig:brokerage_network} provides a visual example of a network with brokerage in which a small number of broker nodes have a higher propensity to form links with brokers of the out-group, hence maintaining majority of cross-group edges. Figure \ref{fig:nonbrokerage_network} shows a similar network with less brokerage which has more frequent cross-type paths of length 2 even though it has the same degree distribution as the brokerage network \ref{fig:brokerage_network}. While brokers play an integral role in connecting otherwise disconnected communities,  they can nevertheless act a bottleneck by reducing the number of possible paths between any two groups when compared to a similar network with cross-group ties uniformly distributed across the network. Because brokers hold a disproportionate number of cross-group ties,  they can constrain diffusion of information from one group to another. In this paper, we argue that one needs to not only look at homophily or assortativity on paths of length 1, but also on the extent of assortativity of all possible diffusion paths of varying lengths to completely account for unequal diffusion in networks.
We then attempt to incorporate the heterogeneity in edge propensities and in particular brokerage into class of Stochastic Block Models and show that by doing so the model better explains unequal diffusion of information.

We show that while directly fitting for assortativity on paths of length 1, SBM fails to accurately capture assortativity on longer paths in real world networks.
In the context of random graphs, network  brokerage  occurs  when  a  few  nodes  in  the  network  have  higher  probability  to connect with an out-group than other in-group nodes. By incorporating this heterogeneity into our models of random network and in particular SBM or degree-corrected SBM, we show that the generative model can better match the assortativity along longer paths and more generally cross-type diffusion in the observed network.
In section \ref{sec:background}, we discuss SBM and some variant models and show that they consistently under-estimate the observed assortativity on paths of length 2 in 56 school networks, even though these models explicitly accounts for assortativity on paths of length 1.
In section \ref{sec:model}, we discuss a general framework for Stochastic Block Models and develop variants which account for node heterogeneity in brokerage and by doing so match assortativity on paths of length 1 and 2 in expectation.  In section \ref{sec:results}, we provide the results from fitting the school networks to our model and show that  even though not explicitly modeled for, it closely matches assortativity on paths of length 3. In section \ref{sec:model_selection}, we address the goodness of fit for this new model versus one that does not account for brokerage. We characterize the distribution of the log likelihood ratio statistic and argue that the test is valid only if the network is dense, which is often not the case for social networks.

In the remainder of this paper, we mostly focus on assortativity of path length 2 and 3 as opposed to longer paths. While diffusion as a general process can occur across paths of any length, nevertheless in many scenarios, especially those that involve access a valuable resource, diffusion mostly occurs along short paths. Therefore, while assortativities along paths of length 2 and 3 do not provide an exact representation of \textbf{diffusion assortativity}, we believe they nevertheless provide a simple and interpretable model that is applicable in most social contexts.

%\addtolength{\textheight}{-.3in}%
\section{Background}
\label{sec:background}
\subsection{Assortativity}
Before discussing the Stochastic Block Model and its properties regarding diffusion, we need to explain the assortativity coefficient, a common way to quantify the level of assortative mixing in a network.
The assortativity coefficient in a directed network, which quantifies the bias in favor of edges between in-group nodes, is defined as below \citep{newman2003mixing}.
\begin{align}
    r^{(1)} = \frac{\sum \limits_{r} e_{rr} - \sum \limits_{r} a_r b_r}{1 - \sum \limits_{r}a_r b_r}
    \label{eq:assort1}
\end{align}
where the quantity $e_{rs}$ is the fraction of total (directed) edges from a node in group $r$ to a node in group $s$, $a_r$ is the fraction of total edges from a node in group $r$ and $b_r$ is the fraction of total edges to a node in group $r$. Below we denote the adjacency matrix as $\mathbf{A}$ and the group of node $i$ as $g_i$.
\begin{align}
    e_{rs} = \frac{\sum \limits_{i,j} A_{ij} \delta_{g_i,r} \delta_{g_j,s}}{\sum \limits_{i,j} A_{ij}} \quad \quad \quad
    a_r = \sum_s e_{rs} \quad \quad \quad
    b_r = \sum_s e_{sr}
    \label{eq:assort1_helper}
\end{align}
The numerator in equation \ref{eq:assort1} is simply the \textit{modularity} of the network, another quantity for the strength of community structure in networks \citep{newman2006modularity, newman2004finding, geng2019probabilistic} that measures the fraction of in-group edges minus its expected value if the stubs were randomly rewired. The assortativity coefficient is effectively the scaled modularity such that $-1 \le r^{(1)} \le 1$. The $(1)$ superscript in equation \ref{eq:assort1} indicates assortativity is measured on paths of length 1.

\subsection{Assortativity on Longer Paths}
We can define higher order measures of assortativity to quantify the level of assortative mixing along diffusion paths. For example, to compute assortativity on paths of length 2 on a (directed) network, we first construct its corresponding network along paths of length 2 forbidding the traversal of the same edge multiple times and call it the second order network. In this network, there is a (directed) edge from node $i$ to $j$ for every path of length 2 from $i$ to $j$ in the original network.
The assortativity of the second order network corresponds to assortativity along paths of length 2 in the original network denoted by $r^{(2)}$. The second order network will be a multi-graph with potential self-loops, both of which are compatible with the definition of assortativity in equation \ref{eq:assort1}. A similar measure to assortativity on longer paths, but in terms of degree assortativity, is discussed in \citep{arcagni2017higher}.

\subsection{Stochastic Block Model}
The Stochastic Block Model (SBM) \citep{Holland1983} is a simple random network model that allow for communities and heterogeneous edge formation between them. It assumes edge formation between a pair of nodes solely depends on their observed block membership and is independent of other pairs. Consequently, all nodes within a block in SBM have the same binomial distribution for their in-group and out-group degree. Often, the SBM is characterized with a matrix whose elements determine the probability of an edge between any pair of blocks. For example, if we assume two groups in the network, the probability matrix for the undirected SBM has the following form.
\begin{align}
P = 
    \begin{bmatrix}
    p_{11} & p_{12} \\
    p_{12} & p_{22}
    \end{bmatrix}
\end{align}

An appealing property of SBM is that it accurately captures the strength of community structure or assortative mixing in a network. In particular, if we let $\hat{r}$ denote the assortativity coefficient of a sampled network from the maximum likelihood fit, $\widehat{P}$, we have the following convergence in probability as network size grows.
\begin{align}
\hat{r}^{(1)} \xrightarrow{\;\; p \;\;} r^{(1)}
\label{eq:convergence_assort}
\end{align}
In fact, if the network is large enough it can be shown that assortativity from the fitted MLE model approximately matches the observed assortativity in expectation, with exact equality in the case of microcanonical SBM \citep{peixoto2017nonparametric}:
\begin{align}
\EE{\hat{r}^{(1)}} \approx r^{(1)}.
\label{eq:expecation_assort}
\end{align}

Despite its simplicity and its wide-spread use to model community structure, SBM has serious drawbacks when it is used to model real-world networks. The main problem with SBM is its inability to allow for degree heterogeneity within a block. This makes SBM an unreasonable model in real world networks which exhibit high levels of degree heterogeneity \citep{peixoto2015}. A maximum likelihood fitting procedure as described above, in the presence of degree heterogeneity, results in communities of high and low degree nodes. In particular, the maximum likelihood estimate captures degree heterogeneity rather than actual community structure since it splits nodes from the same block into distinct blocks differentiated by their degree. For example, \citet{Bickel2009} showed that SBM splits nodes in the famous Karate club network according to their degree rather than extracting the actual communities. 

To avoid this problem, the degree-corrected SBM \citep[DCSBM;][]{Karrer2011} modifies the generative model such that nodes can have different degrees in each block. It does so by introducing a degree-correction parameters for each node that simulates the node's propensity to form edges, hence controlling for the expected degree of each node separately.
A node with a larger value of degree-correction parameter is expected to have larger degree than a node with smaller value and in the same block. Furthermore similar to SBM, the degree-corrected SBM has additional parameters that control for the propensity of any two groups to form links independent of each node's individual degree propensity. SBM is a special cases of its degree-corrected SBM (DCSBM) when all node degree parameters within a single block are equal. Similar to SBM, the fitted maximum likelihood model for DCSBM also matches the observed assortativity as expressed in equations \ref{eq:convergence_assort} and \ref{eq:expecation_assort}.

Despite its ability to model for degree heterogeneity and its success in real world problems, DCSBM is unable to model heterogeneity in in-group and out-group propensities or brokerage since it uses a single parameter per pair of blocks as their edge propensity. In other words, conditional on total degree, all nodes within a block have the same in-group and out-group degree distribution. This makes it difficult for DCSBM to accurately capture assortativity on longer paths if the network exhibits brokerage, as discussed above and shown below empirically. The DCSBM maximum likelihood estimates underestimate higher order assortativity, even though the expected assortativity on paths of length 1 from a DCSBM maximum likelihood fit matches its observed value.

\subsection{Empirical Study of Higher Order Assortativities with DCSBM}
\label{sec:dcsbm-empirical}
In this section, we analyze a collection of real-world social networks and show that many have assortativity on paths of length 2 that is not predicted by SBM which explicitly fits assortativity on paths of length 1. We reuse the data already collected from a previous study that fully mapped out the social network in 56 middle schools \citep{paluck}. These networks are directed and as such we fit them to a directed DCSBM model. We use these networks to study how and whether DCSBM models mixing structure and in particular higher order assortativity accurately. The data also contains various attributes, such as gender, grade, age and GPA per each student. We will use these attributes to define subgroups within the school network and measure the extent of assortativity on paths of length 1, 2 and 3 along several subgroup characterization.

\begin{figure}
     \centering
     \begin{subfigure}[b]{\textwidth}
         \centering
         \includegraphics[width=0.47\textwidth]{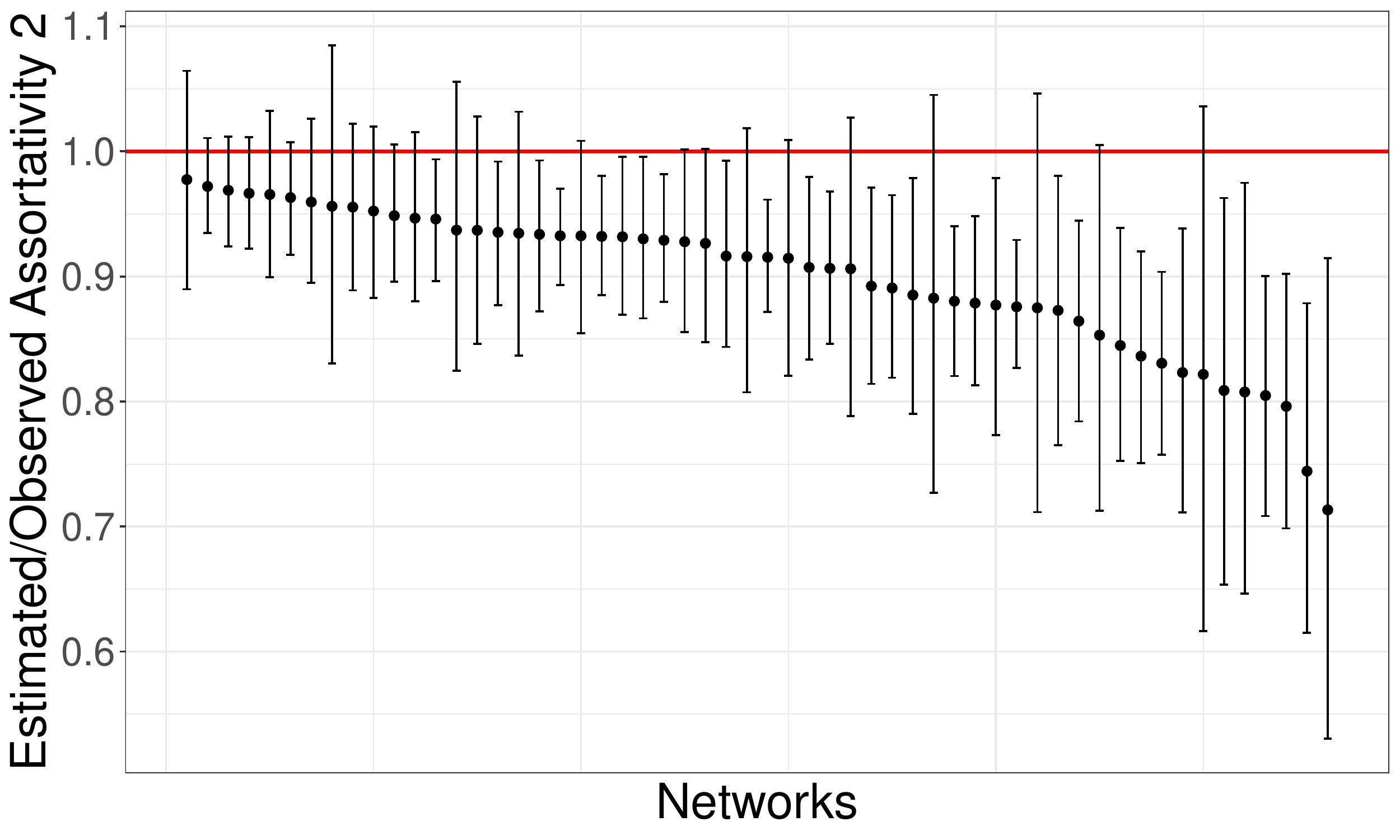}
         \hspace{0.04\textwidth}
         \includegraphics[width=0.47\textwidth]{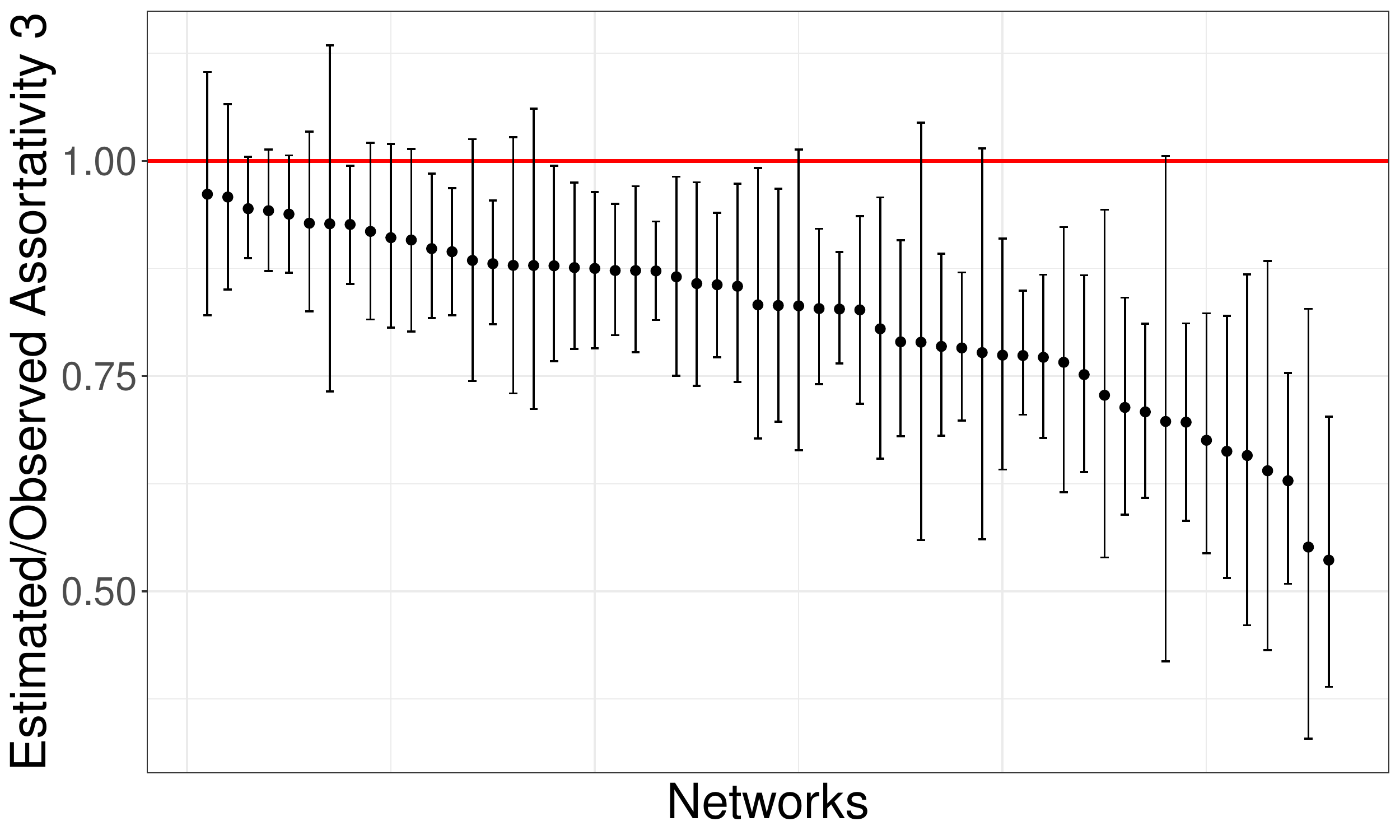}
         \caption{Gender}
     \end{subfigure}
     \par\bigskip % force a bit of vertical whitespace
     \begin{subfigure}[b]{\textwidth}
         \centering
         \includegraphics[width=0.47\textwidth]{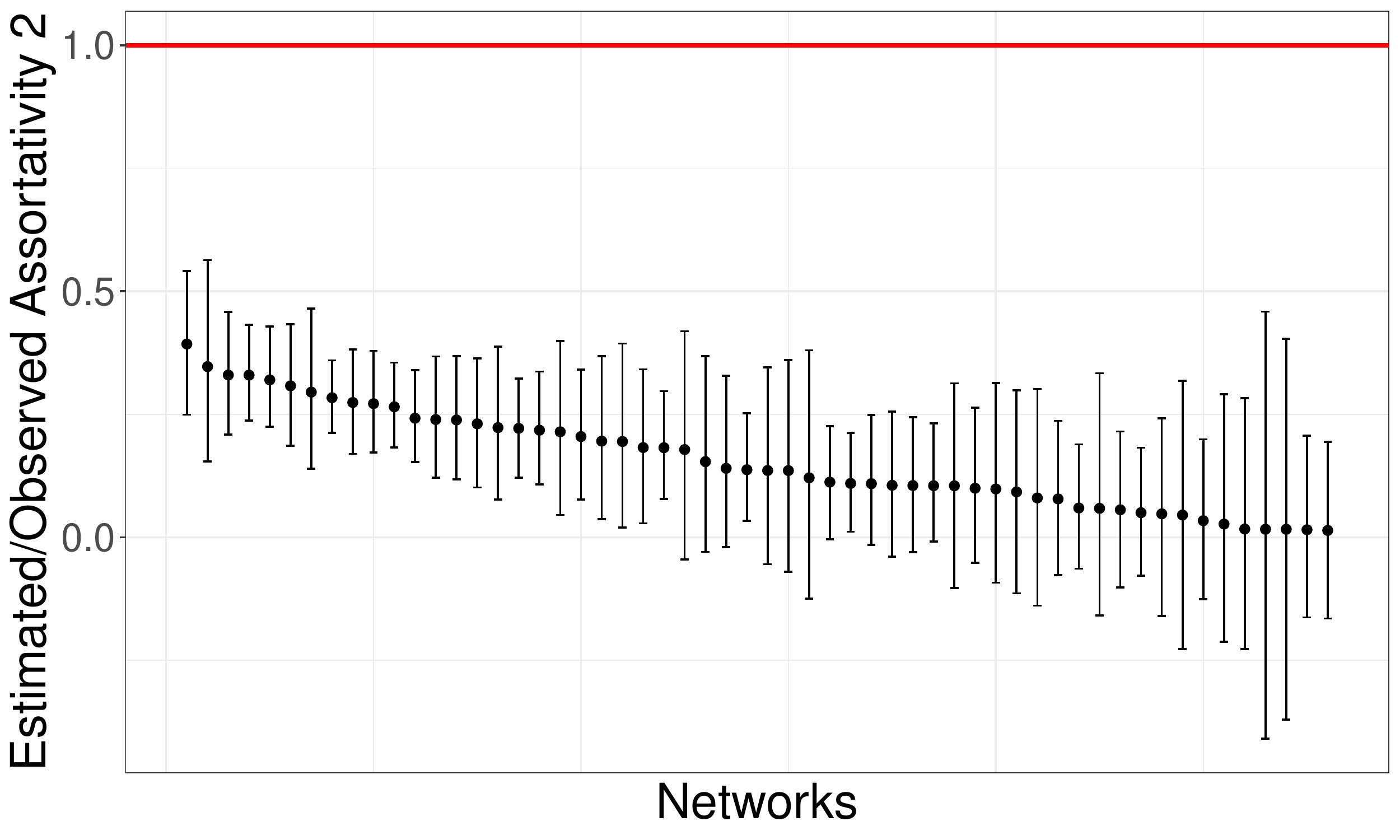}
         \hspace{0.04\textwidth}
         \includegraphics[width=0.47\textwidth]{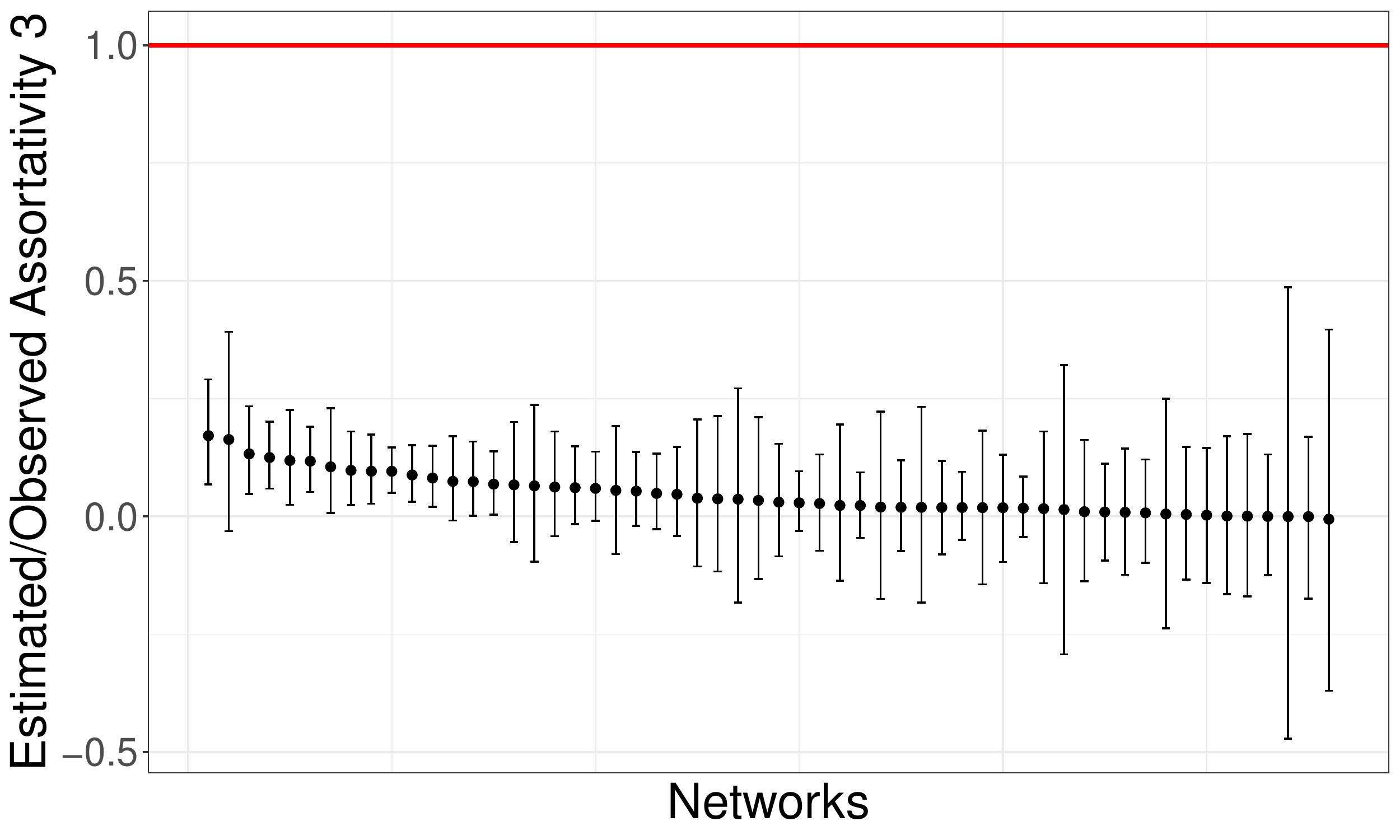}
         \caption{Race}
     \end{subfigure}
    \caption{The distribution of predicted over observed ratio of assortativities on paths of length 2 (left column), $\frac{\hat{r}^{(2)}}{r^{(2)}}$, and 3 (right column), $\frac{\hat{r}^{(3)}}{r^{(3)}}$, from DCSBM along gender (top row) and racial (bottom row) groups. Bars correspond to 95\% confidence interval and each bar corresponds to one school network. Networks are sorted in descending order of the point estimate.}
    \label{fig:dcsbm_observed_vs_predicted_assorts}
\end{figure}

Given the maximum likelihood fit to an observed network, we can generate the distribution of higher order assortativities in a Monte Carlo fashion through repeated sampling of networks from the fitted model, $\widehat{P}$, and computing their assortativity along paths of length 2. This re-sampling procedure to compare other statistical and topological properties of the simulated network, not explicitly accounted for in the model, with the observed network has been used in previous works \citep{williams2000simple, clauset2008hierarchical, foster2011clustering, fischer2015sampling}. This procedure is similar to posterior predictive checks in the Bayesian context \citep{gelman1996posterior}, and it can be used to evaluate the fitness of a model beyond the scope it was designed for.
In our case,  this process reveals that the observed assortativities on paths of length 2 and 3 among the 56 schools are consistently higher than their expected distribution by DCSBM, among all grouping attributes. 
For example, the DCSBM fit based on gender matches the observed assortativity on paths of length 1 in expectation, but 31 out of 56 schools (55\%) exhibit higher assortativity on paths of length 2 than predicted by the fitted model, with two-tailed p-values less than 0.05. Similarly, DCSBM fit based on gender-grade groups (up to 6 groups) leads to 43 schools (76\%) with significantly higher ($p < 0.05$) assortativity on paths of length 2 than predicted by the model.

Figure \ref{fig:dcsbm_observed_vs_predicted_assorts} compares the observed assortativity on paths of length 2 and 3 based on both gender and race (encoded as majority or other) groups with the estimated value from the maximum likelihood model. Even though the observed assortativity on paths of length 1 is always covered by its 95\% confidence interval and very close to the point estimate, the fitted models consistently underestimate higher order assortativities.
The model under-estimates assortativity on paths of length 3 even more than paths of length 2.
Furthermore, DCSBM becomes more inaccurate at predicting higher order assortativities at smaller values. For example, racial assortativity (on paths of length 1) in the schools ranges from 0.02 to 0.32 as opposed to gender which ranges from 0.43 to 0.83, and figure \ref{fig:dcsbm_observed_vs_predicted_assorts} shows that the scale of underestimation is larger for race than gender.

A possible explanation for these discrepancies is the unequal distribution of cross-group edges in the observed networks, while SBM assumes uniform distribution of cross-group edges among all pairs. High brokerage in a network would suggest that a small fraction of nodes in each group hold a large fraction of out-group edges. Conditioned on degree, a more equal distribution of cross-type edges would create extra paths of length 2, thus reducing higher order assortativities.

\section{Model}
\label{sec:model}
In this section, we describe our model that accounts for heterogeneity in out-group edge formation or brokerage and by doing so provides a more accurate estimate of higher order assortativities. Before explaining the model, we restate important concepts and notations used in the model.
\subsection{Preliminary}
\label{sec:model_prelim}
\textbf{Higher Order Networks:} Given a network $W$, its k\textsuperscript{th} order network $W^{(k)}$ determines the presence or lack of paths of length $k$ of unique edges between any pair of nodes in $W$. For example, the second order network is a multi-graph which has as many edges between a pair of nodes as there are number of paths of length 2 between them in the original network. If the original network is directed, its diffusion paths and its higher order networks will be directed too.
\\
\textbf{Adjacency Matrix:} The $(i,j)$ element contains the number of outgoing stubs from node $i$ to node $j$. In the case of undirected DCSBM \citep{Karrer2011}, diagonal elements will be twice the number self-loops since they correspond to the number of self-loop stubs. However, if the network is directed, the diagonal elements contain the number of self-loops, not twice their value, since self-edges are directed and each has only one outgoing stub.
\\
\textbf{Higher Order Assortativities:} Higher order assortativities measure the extent of unequal diffusion in the network. The k\textsuperscript{th} order assortativity of network $G$ is simply the assortativity of its k\textsuperscript{th} order network $G^{(k)}$.  For example, if we denote the directed adjacency matrix of the second order network as $\mathbf{A}^{(2)}$, then we can define the second order assortativity, $r^{(2)}$, in a manner similar to equations \ref{eq:assort1} and \ref{eq:assort1_helper}. 
\begin{align}
    & r^{(2)} = \frac{\sum \limits_{r} e^{(2)}_{rr} - \sum \limits_{r} a^{(2)}_r b^{(2)}_r}{1 - \sum \limits_{r}a^{(2)}_r b^{(2)}_r}
    \label{eq:assort2}
\end{align}
where the quantity $e^{(2)}_{rs}$ is the fraction of total (directed) paths of length 2 from a node in group $r$ to a node in group $s$, $a^{(2)}_r$ is the fraction of total directed paths of length 2 from a node in group $r$ and $b^{(2)}_r$ is the fraction of total paths of length 2 to a node in group $r$ in the original network.
\begin{align}
    & e^{(2)}_{rs} = \frac{\sum \limits_{i,j} A^{(2)}_{ij} \delta_{g_i,r} \delta_{g_j,s}}{\sum \limits_{i,j} A^{(2)}_{ij}} \quad \quad \quad
    a^{(2)}_r = \sum_s e^{(2)}_{rs} \quad \quad \quad
    b^{(2)}_r = \sum_s e^{(2)}_{sr}
    \label{eq:assort2_helper}
\end{align}
\\
\textbf{Directed Networks:} In what follows we develop our model assuming the network is directed as most social networks do have a notion of direction in edges. The undirected model is very similar to the directed version with the difference that it will replace any pair of parameters that correspond to two incoming and outgoing directions in the directed model with a single parameter.
\\
\textbf{Notation:}
Throughout, we refer to the group a node $i$ belongs to as $g_i$, set of all groups as $G$, the set of all nodes as $N$, total number of edges from group $r$ to $s$ as $m_{rs}$, total out-degree (in-degree) of all nodes in group $r$ as $d_r^o$ ($d_r^i$), total out-degree (in-degree) of node $i$ as $d_i^o$ ($d_i^i$) and the out-degree (in-degree) of node $i$ to group $r$ as $d_{i,r}^o$ ($d_{i,r}^i$).

\subsection{Setup}
Our random graph model is based on the degree-corrected Stochastic Block Model \citep{Karrer2011}. In contrast to DCSBM and instead of correcting for the total degree of each node, we correct for its degree to each group. By correcting for the out-group degree of each node, we can differentiate between networks whose cross-group links are exclusive to a small number of brokers versus those with an equal distribution of cross-group links. We show that by including extra parameters for this correction, the model not only corrects for the degree of each node, but also fits the number of in-group and out-group paths of length 1 and 2 in expectation and as a result the estimated assortativity on paths of length 2 is approximately equal to its observed value.

The main difference with DCSBM and our model is that after conditioning on degree, cross-group links are not distributed equally among all nodes of a group. Instead, each node will have a separate parameter for propensity of linking with each group and the combination of these cross-group propensity parameters determines how cross-group edges are distributed among nodes of a group.
Furthermore, as the network is directed, we introduce one such node-level parameter and one group-level baseline linking parameter for each incoming and outgoing direction. Given these parameters, the number of edges from a node $i$ from group $r$ to a node $j$ from group $s$ is modeled as a Poisson random variable with mean $\theta_{i,s}^{o} \theta_{j,r}^{i} \omega_{rs}$ where  $\theta_{i,s}^{o}$ is the outgoing propensity parameter for node $i$ to group $s$, $\theta_{j,r}^{i}$ is the incoming propensity parameter for node $j$ from group $r$ and $\omega_{rs}$ parameter adjusts the baseline number of edges from group $r$ to $s$. Thus, the expected number of edges from $i$ to $j$ is $\EE{A_{ij}} = \theta_{i,g_j}^{o} \theta_{j, g_i}^{i} \omega_{g_i g_j}$. A nice property of this directed model over undirected DCSBM is that the expected number of self-loops match the expected value of their corresponding diagonal elements without an extra $\frac{1}{2}$ factor since we only count the number of outgoing stubs in the adjacency matrix of a directed network.

We can now express the likelihood function in this model with node-level variation in cross-group linking propensity:
\begin{align}
L(\Theta, \Omega; \mathbf{A}) = & \prod_{i,j} \frac{(\theta_{i,g_j}^{o} \theta_{j, g_i}^{i} \omega_{g_i g_j})^{A_{ij}} }{A_{ij}!} \exp(-\theta_{i,g_j}^{o} \theta_{j, g_i}^{i} \omega_{g_i g_j})
\label{eq:model_likelihood}
\end{align}
where $\Theta$ is the set of node-level outgoing and incoming degree propensity parameters, $\Omega$ is the group-level edge formation propensity parameters, $g_i$ denotes the group of node $i$ and $\mathbf{A}$ is the (directed) adjacency matrix where $A_{ij}$ is the number of outgoing edges from node $i$ to $j$. Given this setup, the MLE for $\Omega$ is as followed:
\begin{align}
    \widehat{\omega}_{rs} &= \frac{m_{rs}}{\sum_{i \in r, j \in s} \widehat{\theta}_{i,s}^o \widehat{\theta}_{j,r}^i}
    \label{eq:omega_mle}
\end{align}
where $m_{rs}$ is the number of outgoing edges from group $r$ to group $s$. The denominator resembles the effective number of pairs for such links. To derive the MLE for $\Theta$, we note that $\theta$ parameters can be arbitrary to within a constant, therefore we must impose additional structure on the model. These constraints can take different forms and one of our contributions is to show that different constraints lead to different models. Below we briefly discuss two constraints and derive their resulting MLE.

\subsection{Node Level Constraint}
One alternative for model structure is to impose a constraint on total propensity of each node, as shown below.
\begin{align}
   \forall i \in N: \quad \sum_{g \in G} \theta_{i,g}^o = 1, \quad \sum_{g \in G} \theta_{i,g}^i = 1
   \label{eq:node_constraint}
\end{align}

This constraint imposes the same fixed value on total propensity of linking to and from all groups for each node. It still allows for cross-group linking variation within each group, as each node can distribute its linking propensity differently. However, the constraint limits the degree variation of all nodes, since the overall linking propensity of each node is fixed.
The MLE of this model for $\Theta$ simplifies to a system of equations, as shown below.
\begin{align}
\begin{split}
    & \forall i \in N \quad \forall g_1, g_2 \in G: \quad \sum_{j \in g_1} (\widehat{\omega}_{g_i g_1} \widehat{\theta}_{j, g_i}^i - \frac{A_{ij}}{\widehat{\theta}_{i, g_1}^o}) = \sum_{j \in g_2} (\widehat{\omega}_{g_i g_2} \widehat{\theta}_{j, g_i}^i - \frac{A_{ij}}{\widehat{\theta}_{i, g_2}^o}) \\
    & \forall i \in N \quad \forall g_1, g_2 \in G: \quad \sum_{j \in g_1} (\widehat{\omega}_{g_1 g_i} \widehat{\theta}_{j, g_i}^o - \frac{A_{ji}}{\widehat{\theta}_{i, g_1}^i}) = \sum_{j \in g_2} (\widehat{\omega}_{g_2 g_i} \widehat{\theta}_{j, g_i}^o - \frac{A_{ji}}{\widehat{\theta}_{i, g_2}^i}) 
    \label{eq:theta_mle}
\end{split}
\end{align}

Combining the MLE equations \ref{eq:omega_mle} and \ref{eq:theta_mle} with the constraint equations \ref{eq:node_constraint}, one can numerically compute the MLE. In general, the maximum likelihood estimates don't have a closed-form solution, but if the observed out-degree and in-degree of all nodes within each group are identical, the MLE takes the following convenient and intuitive form.
\begin{align}
\begin{split}
\textit{if } \quad \forall r \in G \;\; \forall i,j \in r \quad \, d_i^o = d_j^o: \quad \widehat{\theta}_{i,s}^o = \frac{d_{i,s}^o}{d_i^o} \\
\textit{if } \quad \forall r \in G \;\; \forall i,j \in r \quad \, d_i^i = d_j^i: \quad \widehat{\theta}_{i,s}^i = \frac{d_{i,s}^i}{d_i^i}
\end{split}
\end{align}
This result implies that the propensity of linking to a group $s$ is simply the observed fraction of the node's total degree to that group.

\subsection{Group Level Constraint}
Another alternative for model structure is to impose a constraint on total propensity of all nodes within a group, as shown below. This model will be the main focus of our work and has close resemblance to DCSBM but with extra desirable properties.
\begin{align}
    & \forall r,s \in G: \quad \sum_{i \in r} \theta_{i,s}^o = 1, \quad \sum_{i \in r} \theta_{i,s}^i = 1
\end{align}
The constraint states that the total propensity of linking to and from group $s$ is fixed among all nodes of group $r$. Variation in cross-group linking among nodes of a group can still exist. Naturally, a good model will distribute the propensity supply of each group according to cross-group degree of the nodes within that group. The MLE of the model for $\Theta$ simplifies to the following intuitive forms:
\begin{align}
\begin{split}
    \forall r,s \in G \quad \forall i \in r: \quad \widehat{\theta}_{i,s}^o &= \frac{d_{i,s}^o}{m_{rs}} \\
    \forall r,s \in G \quad \forall i \in r: \quad \widehat{\theta}_{i,s}^i &= \frac{d_{i,s}^i}{m_{rs}}
\end{split}
\end{align}
In contrast to the previous constraint at the node-level which led to within-node fractions, the MLE for linking propensity to a group $s$ with the group-level constraint becomes the within-group fraction: the observed fraction of total cross-group degree that originates from the focal node. This estimate closely resembles that of the propensity parameter in regular DC-BM with the exception that MLE fractions in DCSBM did not differentiate between the degrees to each group. Given the estimates above for propensity parameters, the MLE for group-level parameters from equation \ref{eq:omega_mle} simplifies to the number of cross-group edges: 
\begin{align}
    \widehat{\omega}_{rs} &= m_{rs} \label{eq:omega_mle_simplified}
\end{align}

\subsubsection{Frequency of Diffusion Paths Under MLE Model}
\label{sec:assort_bias}
Before deriving the expected number of cross-group edges from the model fit, we compute a few useful parameters that result from the fitted model: the expected number of edges between any two nodes and the expected out-degree (in-degree) of a node to a group. The variables with a hat are generated by the model and refer to the corresponding observed quantity with same symbol.
\begin{align}
    \EE{\widehat{A}_{ij}} &= \frac{d_{i,g_j}^o d_{j,g_i}^i}{m_{g_i g_j}} \label{eq:expected_adjacency} \\
    \EE{\widehat{d}_{i,s}^o} &= \EE{\sum_{j \in s} \widehat{A}_{ij}} = d_{i,s}^o   \label{eq:expected_outdeg} \\
    \EE{\widehat{d}_{i,s}^i} &= \EE{\sum_{j \in s} \widehat{A}_{ji}} = d_{i,s}^i \label{eq:expected_indeg}
\end{align}

We now show that the fitted model matches not only the observed number of paths of length 1 but also the observed number of paths of length 2 between any two groups in expectation, even though the model does not explicitly account for it. Throughout, we assume that traversing the same edge twice is not permissible (e.g. paths cannot use a self-loop twice). However, traversing from a node to its neighbor and back to itself is allowed as long as there is a directed edge in each direction. This is possible under our analysis since edges with different directions between any pair are drawn independently and considered different.
%Furthermore, while the fitted model is a multi-graph with self-loops, we assume that the observed network is a simple graph without self-loops.

As the first step, we show that that expected number of edges between any two groups in the MLE model matches that of the observed network. Below, we denote the observed and (random) model-generated number of paths of length $k$ from group $r$ to group $s$ by $P_{rs}^{(k)}$ and $\widehat{P}_{rs}^{(k)}$ respectively.
\begin{align}
\begin{split}
%P_{rs}^{(1)} &= \sum_{i \in r} d_{i,s}^{o} = m_{rs}  \\
%\widehat{P}_{rs}^{(1)} &= \sum_{i \in r} \sum_{j \in s} \widehat{A}_{ij} \\
\EE{\widehat{P}_{rs}^{(1)}} &= \sum_{i \in r} \sum_{j \in s} \frac{d_{i,s}^o d_{j,r}^i}{m_{rs}} = m_{rs} \\
&= P_{rs}^{(1)}
\end{split}
\label{eq:expected-pl1}
\end{align}
We used equation (\ref{eq:expected_adjacency}) in the first line above.  We now show a similar result for paths of length 2. First, we show that the expected number of paths of length 2 between two different groups, $r\neq s$, matches the observed network.
\begin{align}
\begin{split}
%P_{rs}^{(2)} &= \sum_{j} d_{j,r}^{i} d_{j,s}^{o} \\
%\widehat{P}_{rs}^{(2)} &= \sum_{j} \sum_{i \in r} \widehat{A}_{ij} \sum_{k \in s} \widehat{A}_{jk} \\
\EE{\widehat{P}_{rs}^{(2)}} 
%&= \sum_{j} \sum_{i \in r, k \in s} E\big[\widehat{A}_{ij} \widehat{A}_{jk} \big] \\
&= \sum_{j} \sum_{i \in r, k \in s} \EE{\widehat{A}_{ij}} \EE{\widehat{A}_{jk}} \\
&= \sum_{j} \EE{\widehat{d}_{j,r}^{i}} \EE{\widehat{d}_{j,s}^{o}} \\
&= \sum_{j} d_{j,r}^{i} d_{j,s}^{o} \\
&= P_{rs}^{(2)}
\end{split}
\label{eq:diff_group_pl2}
\end{align}
In the first line above, we used the fact that edges are independent and in the third line, we relied on equations (\ref{eq:expected_outdeg}) and (\ref{eq:expected_indeg}). We now show that the expected number of paths of length 2 within a single group is also the same as the observed value. For this result to hold, we must assume the observed networks does not have self-loops. In the appendix, we characterize the bias on the number of in-group paths of length 2 if the observed network has self-loops and show that it vanishes compared to the total number of paths as network size grows.  

\begin{align}
\begin{split}
%P_{rr}^{(2)} &= \sum_{j} d_{j,r}^{i} d_{j,r}^{o} \\
%\widehat{P}_{rr}^{(2)} &= \sum_{j} \sum_{\substack{i,k \in r \\ i \neq k}} \widehat{A}_{ij} \widehat{A}_{jk} + \sum_{j} \sum_{\substack{i \in r \\ i \neq j}} \widehat{A}_{ij} \widehat{A}_{ji} + \sum_{j \in r} \widehat{A}_{jj} (\widehat{A}_{jj} - 1) \\
\EE{\widehat{P}_{rr}^{(2)}} &= \sum_{j} \sum_{\substack{i,k \in r \\ i \neq k}} \EE{\widehat{A}_{ij}} \EE{\widehat{A}_{jk}} + \sum_{j} \sum_{\substack{i \in r \\ i \neq j}} \EE{\widehat{A}_{ij}} \EE{\widehat{A}_{ji}} + \sum_{j \in r} \EE{\widehat{A}_{jj}(\widehat{A}_{jj} - 1)} \\
&= \sum_{j} \sum_{\substack{i,k \in r \\ i \neq k}} \EE{\widehat{A}_{ij}} \EE{\widehat{A}_{jk}} + \sum_{j} \sum_{\substack{i \in r \\ i \neq j}} \EE{\widehat{A}_{ij}} \EE{\widehat{A}_{ji}} + \sum_{j \in r} \EE{\widehat{A}_{jj}}^2 \\
&= \sum_{j} \sum_{\substack{i,k \in r}} \EE{\widehat{A}_{ij}} \EE{\widehat{A}_{jk}} \\
&= \sum_{j} \EE{\widehat{d}_{j,r}^{i}} \EE{\widehat{d}_{j,r}^{o}} \\
&= \sum_{j} d_{j,r}^{i} d_{j,r}^{o} \\
&= P_{rr}^{(2)}
\end{split}
\label{eq:same_group_pl2}
\end{align}
The first line uses the fact that traversing the same edge, including a self-loop in the last term, is not allowed twice and since the network is directed, edges in different directions between the same pair of nodes are considered different and independent, $\widehat{A}_{ij} \perp \widehat{A}_{ji} $. The last term in the second line relies on the fact that the number of edges has a Poisson distribution.
The last line uses our assumption that the observed network does not have any self-loops.
The appendix shows that if the observed network has self-loops, then the estimated number of within-group paths of length 2 would be biased positively by the total number of self-loops within the group.

Finally we note that all the unbiasedness results above on the number of paths of length 1 and length 2 also hold if the network is undirected and traversing the same undirected edge is not allowed twice. In an undirected network, $\widehat{A}_{ij}$ and $\widehat{A}_{ji}$ are assumed to be the same edge hence the second term in the first line of equation (\ref{eq:same_group_pl2}) becomes $\EE{\widehat{A}_{ij}(\widehat{A}_{ij}-1)}$. Without this assumption, there will be a bias in the number of in-group paths of length 2 ($\widehat{P}_{rr}^{(2)}$) roughly equal to the total number of edges adjacent to that group which again vanishes compared to the total number of in-group paths as the size of network grows.

\subsubsection{Asymptotic Behavior of Diffusion Assortativity Under MLE Model}
\label{sec:assort_consistency}
First, we quickly prove a simple extension of weak law of large numbers which we will use in our proof of diffusion assortativity consistency.

\begin{lemma} Let ${\{X_i\}}_1^\infty$ be a sequence of independent random variables with $E[X_i] = \mu_i$ and $Var(X_i) = \sigma_i^2$. If the sequence of variances ${\{\sigma_i^2\}}_1^\infty$ is bounded, then $\frac{\sum_i^n X_i}{n} \rightarrow \frac{\sum_i^n \mu_i}{n}$ in probability.
\label{prop:weak_law}
\end{lemma}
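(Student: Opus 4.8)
The plan is to reduce the statement to the classical $L^2$ (Chebyshev) weak law of large numbers. Write $S_n = \sum_{i=1}^n X_i$ and $m_n = \sum_{i=1}^n \mu_i = \EE{S_n}$, so that the assertion to be proved is that $S_n/n - m_n/n \to 0$ in probability as $n \to \infty$. Since $S_n/n$ has mean $m_n/n$, it suffices to show that its variance tends to $0$ and then invoke Chebyshev's inequality.

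First I would compute the variance of $S_n/n$. By independence of the $X_i$ (in fact pairwise uncorrelatedness is all that is used), $\mathrm{Var}(S_n) = \sum_{i=1}^n \sigma_i^2$, so $\mathrm{Var}(S_n/n) = \tfrac{1}{n^2}\sum_{i=1}^n \sigma_i^2$. Using the boundedness hypothesis, set $C = \sup_i \sigma_i^2 < \infty$; then $\mathrm{Var}(S_n/n) \le C/n$. Next, for any $\varepsilon > 0$, Chebyshev's inequality gives
\[
\Pr\left(\left|\frac{S_n}{n} - \frac{m_n}{n}\right| \ge \varepsilon\right) \le \frac{\mathrm{Var}(S_n/n)}{\varepsilon^2} \le \frac{C}{n\varepsilon^2},
\]
and letting $n \to \infty$ the right-hand side vanishes, which is exactly the claimed convergence in probability.

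There is no genuine obstacle here — the argument is the standard two-line Chebyshev estimate — so the only points that merit a remark are interpretational. The target $m_n/n$ is itself a deterministic sequence rather than a fixed number, so the conclusion should be read as $S_n/n - m_n/n \xrightarrow{\;\;p\;\;} 0$, and no assumption about convergence of $m_n/n$ is needed. Also, the boundedness of $\{\sigma_i^2\}$ can be weakened to $\tfrac{1}{n^2}\sum_{i=1}^n \sigma_i^2 \to 0$ without changing the proof, but the stated bounded-variance hypothesis is all that is required for the downstream use in the diffusion-assortativity consistency argument.
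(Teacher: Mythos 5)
Your proof is correct and follows essentially the same route as the paper's: bound $\mathrm{Var}(S_n/n)$ by $C/n$ using the bounded-variance hypothesis and conclude via Chebyshev's inequality. The paper's own proof is the same two-line Chebyshev argument, so there is nothing to reconcile.
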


\begin{proof}
Let $S_n = \frac{\sum_i^n X_i}{n}$ and $\mu = \frac{\sum_i^n \mu_i}{n}$, then $Var(S_n) = \frac{\sum_i^n \sigma_i^2}{n^2} \rightarrow 0$. This follows from simple application of Chebychev's inequality.
\begin{align*}
    P(|S_n - \mu| \ge \epsilon) \le \frac{Var(S_n)}{\epsilon^2} \rightarrow 0
\end{align*}
\end{proof}

\begin{proposition} Let $n_r$ be the size of nodes in group $r$ and $n = \sum_r n_r$ be the size of all nodes in the network. If $\hat{e}_{rs}^{(2)}$ is determined from the sampled network according to equation (\ref{eq:assort2_helper}) and $\mathbf{A} \neq 0$, then $\hat{e}_{rs}^{(2)} \xrightarrow{p} e_{rs}^{(2)}$ as $n \rightarrow \infty$.
\label{prop:fractions}
\end{proposition}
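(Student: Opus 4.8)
The plan is to show that the numerator and denominator of $\hat e_{rs}^{(2)}$ each concentrate in probability on the corresponding observed quantity, and then to pass to the ratio. Write $\widehat P_{rs}^{(2)}$ for the random number of length-$2$ paths from group $r$ to group $s$ in the sampled network and $T=\sum_{r',s'}\widehat P_{r's'}^{(2)}$ for the total; let $P_{rs}^{(2)}$ and $t=\sum_{r',s'}P_{r's'}^{(2)}$ be the observed counterparts, so that $\hat e_{rs}^{(2)}=\widehat P_{rs}^{(2)}/T$ and $e_{rs}^{(2)}=P_{rs}^{(2)}/t$, both well defined once $\mathbf A\neq 0$ and the network carries length-$2$ paths. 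Section~\ref{sec:assort_bias} (equations (\ref{eq:diff_group_pl2}) and (\ref{eq:same_group_pl2})) already establishes $\EE{\widehat P_{rs}^{(2)}}=P_{rs}^{(2)}$ for every pair, hence $\EE{T}=t$. Mirroring the Chebyshev argument used for Lemma~\ref{prop:weak_law}, it therefore suffices to prove $\operatorname{Var}(\widehat P_{rs}^{(2)})=o(t^2)$ for every pair and for the total: then $\widehat P_{rs}^{(2)}/t\xrightarrow{p}P_{rs}^{(2)}/t$ and $T/t\xrightarrow{p}1$, and since $P_{rs}^{(2)}/t\in[0,1]$ while $T/t\xrightarrow{p}1>0$, writing $\hat e_{rs}^{(2)}-e_{rs}^{(2)}=\big[(\widehat P_{rs}^{(2)}/t-P_{rs}^{(2)}/t)+(P_{rs}^{(2)}/t)(1-T/t)\big]/(T/t)$ and invoking Slutsky's theorem yields $\hat e_{rs}^{(2)}\xrightarrow{p}e_{rs}^{(2)}$.

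The core is therefore the variance bound, the complication relative to Lemma~\ref{prop:weak_law} being that $\widehat P_{rs}^{(2)}$ is a sum of \emph{dependent} terms. I would use the through-middle-node representation from Section~\ref{sec:assort_bias}: for $r\neq s$, $\widehat P_{rs}^{(2)}=\sum_j Y_j$ with $Y_j=\widehat d_{j,r}^{\,i}\,\widehat d_{j,s}^{\,o}$ (the case $r=s$ adds the backtracking and diagonal corrections of (\ref{eq:same_group_pl2}), which that section and the appendix show are of lower order than $t$). Since a sum of independent Poisson variables is Poisson, $\widehat d_{j,r}^{\,i}\sim\operatorname{Pois}(d_{j,r}^{\,i})$ and $\widehat d_{j,s}^{\,o}\sim\operatorname{Pois}(d_{j,s}^{\,o})$, and for fixed $j$ these rest on disjoint sets of incident edges, hence are independent; a direct computation gives $\sum_j\operatorname{Var}(Y_j)=\sum_j\big(d_{j,r}^{\,i}d_{j,s}^{\,o}+d_{j,r}^{\,i}(d_{j,s}^{\,o})^2+(d_{j,r}^{\,i})^2 d_{j,s}^{\,o}\big)=O\big((1+d_{\max})\,t\big)$, where $d_{\max}=\max_i(d_i^{\,o}\vee d_i^{\,i})$. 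Two length-$2$ paths through distinct middle nodes $j\neq j'$ can share at most the single directed edge between $j$ and $j'$, so $\operatorname{Cov}(Y_j,Y_{j'})$ is nonzero only for those pairs and is controlled by $\EE{\widehat A_{jj'}}$ times products of Poisson degree moments; summing over all such pairs contributes $O\big((1+d_{\max}^2)\,m\big)$, with $m$ the number of edges.

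A tidier route to the same estimate is the Efron--Stein inequality applied to the independent array $(\widehat A_{ij})$: resampling one entry $\widehat A_{ij}$ alters $\widehat P_{rs}^{(2)}$ by at most $|\Delta\widehat A_{ij}|\,(\widehat d_i^{\,i}+\widehat d_j^{\,o}+1)$, the number of length-$2$ paths using edge $i\!\to\!j$, so $\operatorname{Var}(\widehat P_{rs}^{(2)})\le\sum_{i,j}2\lambda_{ij}\,\EE{(\widehat d_i^{\,i}+\widehat d_j^{\,o}+1)^2}\lesssim d_{\max}\sum_i d_i^{\,i}d_i^{\,o}+m\lesssim d_{\max}\,t+m$, using $\sum_j\lambda_{ij}=d_i^{\,o}$ (equations (\ref{eq:expected_outdeg}) and (\ref{eq:expected_indeg})) and that $\sum_i d_i^{\,i}d_i^{\,o}$ exceeds $t$ by only the lower-order backtracking terms. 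Either way $\operatorname{Var}(\widehat P_{rs}^{(2)})/t^2\lesssim d_{\max}/t+m/t^2$, which vanishes under the growing-network regularity in force here — $t\to\infty$ with $d_{\max}=o(t)$ and $m=o(t^2)$ — conditions that hold automatically in the regimes considered (bounded degrees with $t=\Theta(n)$, or the dense regime with $d_{\max}=\Theta(n)$ and $t=\Theta(n^3)$) and merely rule out a single node carrying a constant fraction of all length-$2$ paths. The same bound applied to $T$ gives $T/t\xrightarrow{p}1$.

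The step I expect to be the real obstacle is precisely this last bookkeeping: identifying exactly which pairs of length-$2$ paths are correlated (they overlap in an edge) and bounding the resulting covariance sum against $t^2$, together with verifying — for $r=s$ — that the self-loop and backtracking corrections of (\ref{eq:same_group_pl2}) are genuinely negligible at scale $t$. The Efron--Stein route shortcuts most of the combinatorics, but it still forces one to state the ``no dominating hub'' regularity explicitly, which is the one place where the bare hypothesis $\mathbf A\neq 0$ does not by itself suffice.
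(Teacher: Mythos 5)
Your proposal is correct in outline and lands on the same basic strategy as the paper --- a second-moment (Chebyshev-type) concentration of the numerator and denominator of $\hat e^{(2)}_{rs}$ around their expectations, followed by passage to the ratio --- but the key technical step is executed quite differently. The paper normalizes the triple sums $\sum \hat A_{ik}\hat A_{kj}$ by $n_r n_s n$ and $n^3$ and invokes Lemma \ref{prop:weak_law}, treating the product terms essentially as independent summands and only addressing the within-term dependence at $i=j=k$ (self-loops); it then recomputes the limits of the normalized expectations and finishes with the continuous mapping theorem. You instead keep the unnormalized path counts, reuse the unbiasedness results of equations (\ref{eq:diff_group_pl2}) and (\ref{eq:same_group_pl2}) to identify the limits, and confront head-on the dependence between length-2 paths that share an edge, bounding the cross-term covariances directly or via Efron--Stein before applying Slutsky. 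That bookkeeping is a genuine contribution: Lemma \ref{prop:weak_law} formally requires independent summands, which the products $\hat A_{ik}\hat A_{kj}$ are not, so your route is the more careful one. The price is that you must state explicit regularity ($t\to\infty$, $d_{\max}=o(t)$, $m=o(t^2)$) beyond $\mathbf A\neq 0$; note, though, that the paper's argument also implicitly needs comparable conditions (uniformly bounded variances for its lemma and a non-degenerate limit for the normalized denominator), so your caveat is a fair reading of what the proposition actually requires rather than a defect of your approach.
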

\begin{proof} 
Below we denote the adjacency matrix of the second order network of the sampled network as ${\hat{\mathbf{A}}}^{(2)}$. We allow for traversing the same edge multiple times and show that prohibiting them does not affect the result.
\begin{align*}
    \hat{e}^{(2)}_{rs} &= \frac{\sum \limits_{i \in r,j \in s} \hat{A}^{(2)}_{ij}}{\sum \limits_{i,j} \hat{A}^{(2)}_{ij}} \\
    &= \frac{\sum \limits_{i \in r,j \in s,k} \hat{A}_{ik} \hat{A}_{kj}}{\sum \limits_{i,j,k} \hat{A}_{ik} \hat{A}_{kj}} \\
    &= \frac{n_r n_s}{n^2}
     \frac{\frac{\sum \limits_{i \in r,j \in s,k} \hat{A}_{ik} \hat{A}_{kj}}{n_r n_s n}}
          {\frac{\sum \limits_{i,j,k} \hat{A}_{ik} \hat{A}_{kj}}{n^3}}
\numberthis \label{eq:prop2-intermediate1} 
\end{align*}
In the second line above, we allowed for traversing the same edge twice. This can happen only if $i=j=k$ (self-loops). The terms $\hat{A}_{ik}$ and $\hat{A}_{kj}$ are two Poisson random variables with finite mean and variance, thus their product also has finite mean and variance. By applying lemma \ref{prop:weak_law}, we get 
\begin{align*}
\frac{\sum \limits_{i \in r,j \in s,k} \hat{A}_{ik} \hat{A}_{kj}}{n_r n_s n} \xrightarrow{p} \frac{\sum \limits_{i \in r,j \in s,k} E[\hat{A}_{ik} \hat{A}_{kj}]}{n_r n_s n}
\numberthis \label{eq:prop2-intermediate2} 
\end{align*}
The terms $\hat{A}_{ik}$ and $\hat{A}_{kj}$ are independent unless $i = j = k$ which can only happen if $r=s$. Below we assume this is the case but the results remain the same even if $r \neq s$.
\begin{align*}
\lim_{n \to \infty} \frac{\sum \limits_{i \in r,j \in s,k} E[\hat{A}_{ik} \hat{A}_{kj}]}{n_r n_s n} &= 
\lim_{n \to \infty} \frac{\sum \limits_{i \in r,j \in s,k} E[\hat{A}_{ik}] E[\hat{A}_{kj}]}{n_r n_s n} +
\lim_{n \to \infty} \frac{\sum \limits_{i \in r} E[\hat{A}_{ii}^2]}{n_r n_s n} 
\numberthis \label{eq:prop2-intermediate3}  \\
&= \lim_{n \to \infty} \frac{\sum \limits_{i \in r,j \in s} d_i^o d_j^i}{n_r n_s n}
\end{align*}
In the second line we used the fact that variance of self-loops is finite. Combining the result above with equation \ref{eq:prop2-intermediate2} and performing the same analysis for the denominator in equation \ref{eq:prop2-intermediate1}, we get the following convergences.
\begin{align}
\begin{split}
\frac{\sum \limits_{i \in r,j \in s,k} \hat{A}_{ik} \hat{A}_{kj}}{n_r n_s n} &\xrightarrow{p} \frac{\sum \limits_{i \in r,j \in s} d_i^o d_j^i}{n_r n_s n} \\
\frac{\sum \limits_{i,j,k} \hat{A}_{ik} \hat{A}_{kj}}{n^3} &\xrightarrow{p} \frac{\sum \limits_{i,j} d_i^o d_j^i}{n^3}
\end{split}
\label{eq:prop2-intermediate4} 
\end{align}
Combining equations \ref{eq:prop2-intermediate1} and \ref{eq:prop2-intermediate4} and the fact that $\sum \limits_{i,j} d_i^o d_j^i \neq 0$, we get the result using the continuous mapping theorem:
\begin{align}
    \hat{e}^{(2)}_{rs} \xrightarrow{p} e^{(2)}_{rs} 
\end{align}
\end{proof}

\noindent \textit{Remark 1}. If we had not allowed for traversing the same edge multiple times, the second term in equation \ref{eq:prop2-intermediate3} would not be present and the final limit would be the same.

\noindent \textit{Remark 2}. In case of an undirected network, we would have the same convergence results as long as $n_s \rightarrow \infty$ for all $s$ when $n \rightarrow \infty$. In this case, the second term in equation \ref{eq:prop2-intermediate3} would be replaced by $\frac{\sum \limits_{i \in r,k} E[\hat{A}_{ik}^2]}{n_r n_s n}$ which still tends to zero as $n \rightarrow \infty$.

\begin{proposition}
The sampled assortativity on paths of length 2 from the MLE model converges in probability to the observed assortativity on paths of length 2.
\label{prop:assort_consistency}
\end{proposition}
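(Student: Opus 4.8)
The plan is to realise $\hat r^{(2)}$ as a fixed continuous function of the finitely many second-order edge fractions $\{\hat e^{(2)}_{rs}\}_{r,s\in G}$ of the sampled network, feed Proposition~\ref{prop:fractions} into each coordinate, and finish with the continuous mapping theorem. Proposition~\ref{prop:fractions} (together with Remark~1, so that the result applies to the distinct-edge count of length-$2$ paths that actually defines $\hat r^{(2)}$) gives $\hat e^{(2)}_{rs}\xrightarrow{p} e^{(2)}_{rs}$ for each of the $|G|^2$ pairs $(r,s)$. Since there are finitely many such pairs and the limits $e^{(2)}_{rs}$ are deterministic, a union bound upgrades this coordinatewise convergence to joint convergence in probability of the vector $(\hat e^{(2)}_{rs})_{r,s\in G}$ to $(e^{(2)}_{rs})_{r,s\in G}$.

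Next I would carry out the routine assembly. From equations~(\ref{eq:assort2}) and~(\ref{eq:assort2_helper}), applied to the sampled second-order network,
\[
\hat a^{(2)}_r=\sum_{s\in G}\hat e^{(2)}_{rs},\qquad
\hat b^{(2)}_r=\sum_{s\in G}\hat e^{(2)}_{sr},
\]
\[
\hat r^{(2)}=\frac{\hat U}{\hat V},\qquad
\hat U:=\sum_{r}\hat e^{(2)}_{rr}-\sum_{r}\hat a^{(2)}_r\hat b^{(2)}_r,\qquad
\hat V:=1-\sum_{r}\hat a^{(2)}_r\hat b^{(2)}_r .
\]
Both $\hat U$ and $\hat V$ are polynomials in the coordinates $\hat e^{(2)}_{rs}$, hence continuous, so the continuous mapping theorem gives $\hat U\xrightarrow{p}U$ and $\hat V\xrightarrow{p}V$, where $U$ and $V$ are the same polynomials evaluated at the observed fractions, i.e. $U=\sum_r e^{(2)}_{rr}-\sum_r a^{(2)}_r b^{(2)}_r$ and $V=1-\sum_r a^{(2)}_r b^{(2)}_r$, so that $U/V=r^{(2)}$. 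Since $(u,v)\mapsto u/v$ is continuous at every point with $v\neq 0$, one further application of the continuous mapping theorem yields $\hat r^{(2)}=\hat U/\hat V\xrightarrow{p}U/V=r^{(2)}$.

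The two points that genuinely need care — and where I expect the only real friction — are the hypotheses making the two invocations of the continuous mapping theorem legitimate. First, Proposition~\ref{prop:fractions} is used under its stated assumptions ($\mathbf{A}\neq 0$ and, as its proof requires, $\sum_{i,j}d_i^o d_j^i\neq 0$, i.e. the observed network contains at least one length-$2$ path); since then $\hat\omega_{rs}=m_{rs}>0$ for some $r,s$, the sampled second-order network is nonempty with probability tending to $1$, and it suffices to argue on that asymptotically certain event, where $\hat r^{(2)}=\hat U/\hat V$ is well defined. Second, dividing in the limit requires $V=1-\sum_r a^{(2)}_r b^{(2)}_r\neq 0$, which is precisely the non-degeneracy condition under which the observed $r^{(2)}$ is itself defined; it can fail only in degenerate configurations, since $\sum_r a^{(2)}_r b^{(2)}_r\le 1$ with equality only if $b^{(2)}_r=1$ for every group $r$ with $a^{(2)}_r>0$. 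Granting these mild conditions there is nothing further to do: the substantive content is all in Proposition~\ref{prop:fractions}, and the present proposition is essentially its corollary via the continuous mapping theorem.
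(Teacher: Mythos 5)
Your proposal is correct and follows essentially the same route as the paper: apply Proposition~\ref{prop:fractions} to each term $\hat e^{(2)}_{rs}$ and conclude via the continuous mapping theorem, using that the limiting denominator $1-\sum_r a^{(2)}_r b^{(2)}_r$ is nonzero. Your extra care (joint convergence of the finitely many coordinates, the event that the sampled second-order network is nonempty, and the degenerate equality case for the denominator) only makes explicit details the paper leaves implicit.
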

\begin{proof}
The assortativity on paths of length 2 from a sampled network is defined as below.
\begin{align*}
    \hat{r}^{(2)} = \frac{\sum \limits_{r} \hat{e}_{rr}^{(2)} - \sum \limits_{r} \hat{a}_r^{(2)} \hat{b}_r^{(2)}}{1 - \sum \limits_{r}\hat{a}_r^{(2)} \hat{b}_r^{(2)}}
    \quad \quad \quad
    \hat{a}^{(2)}_r = \sum_s \hat{e}^{(2)}_{rs} \quad \quad \quad
    \hat{b}^{(2)}_r = \sum_s \hat{e}^{(2)}_{sr}
    \label{eq:prop3-intermediate1}
\end{align*}
where all quantities $\hat{e}_{rs}^{(2)}, \hat{a}_r^{(2)}, \hat{b}_r^{(2)}$ are determined from the sampled network. The result follows using proposition \ref{prop:fractions} on each individual term of $\hat{r}^{(2)}$ and the continuous mapping theorem. In the application of continuous mapping theorem we rely on $\sum \limits_{r} a_r^{(2)} b_r^{(2)} < 1$ since $\sum \limits_{r,s} e_{rs}^{(2)} = 1$.
\end{proof}

\noindent \textit{Remark}. In case of an undirected network, the same result holds as long as $n_s \rightarrow \infty$ for all $s$ when $n \rightarrow \infty$.

\section{Empirical Results}
\label{sec:results}
In this section, we show the same results as in section \ref{sec:dcsbm-empirical} but using our (directed) model instead of the (directed) DCSBM. In particular, we consider the same networks as before and compare their observed assortativity on paths of length 2 and 3 versus the distribution of those quantities generated by the MLE model. As shown above, we would expect the observed assortativity on paths of length 2 to be close to the predicted value by the fitted model, since the networks are large enough. Even if the networks are not large enough for proposition \ref{prop:assort_consistency} to be valid, the bias in model-generated assortativity should be small since the number of in-group and out-group paths of length 2 from the model match the corresponding observed values in expectation, as shown in section \ref{sec:assort_bias}.

\begin{figure}
     \centering
     \begin{subfigure}[b]{\textwidth}
         \centering
         \includegraphics[width=0.47\textwidth]{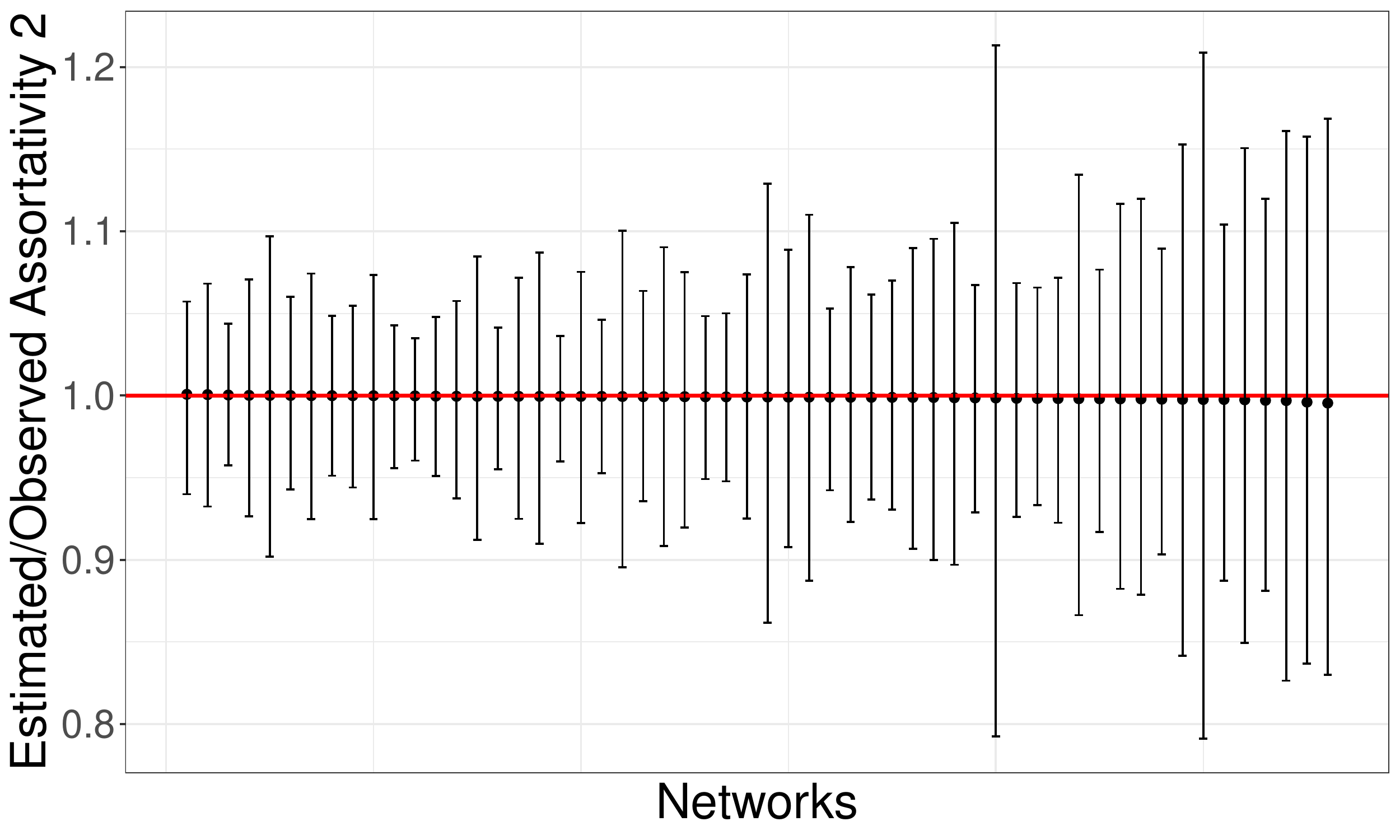}
         \hspace{0.04\textwidth}
         \includegraphics[width=0.47\textwidth]{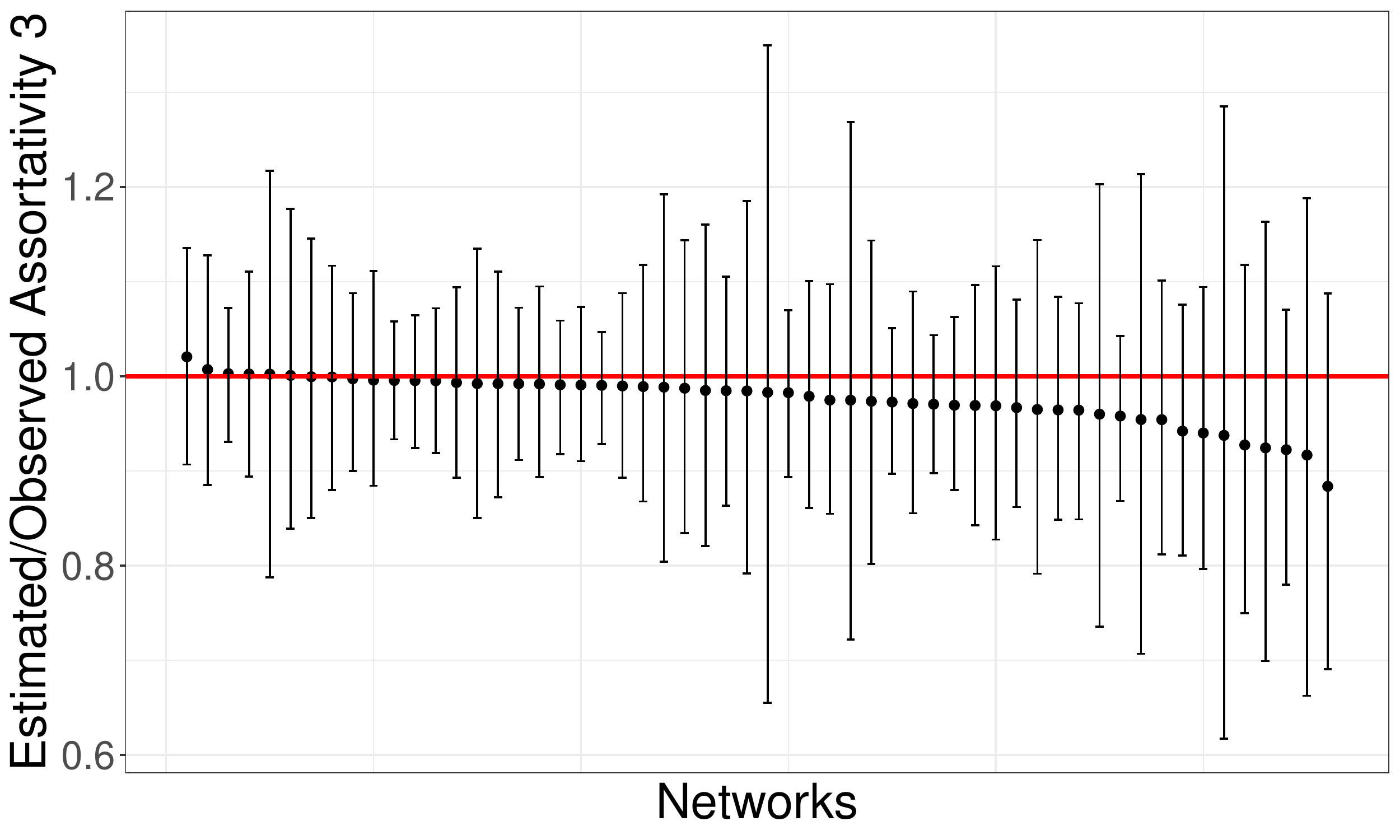}
         \caption{Gender}
     \end{subfigure}
     \par\bigskip % force a bit of vertical whitespace
     \begin{subfigure}[b]{\textwidth}
         \centering
         \includegraphics[width=0.47\textwidth]{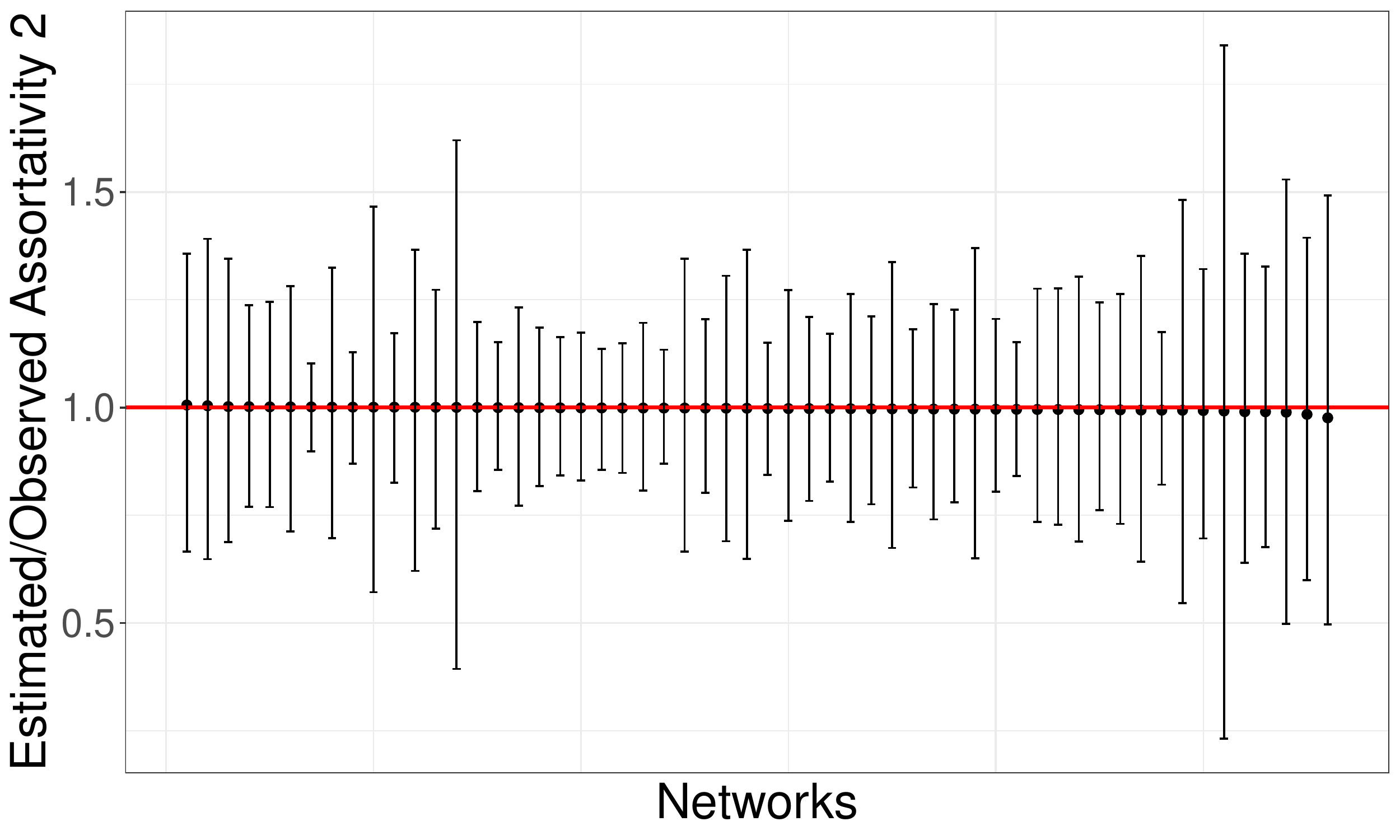}
         \hspace{0.04\textwidth}
         \includegraphics[width=0.47\textwidth]{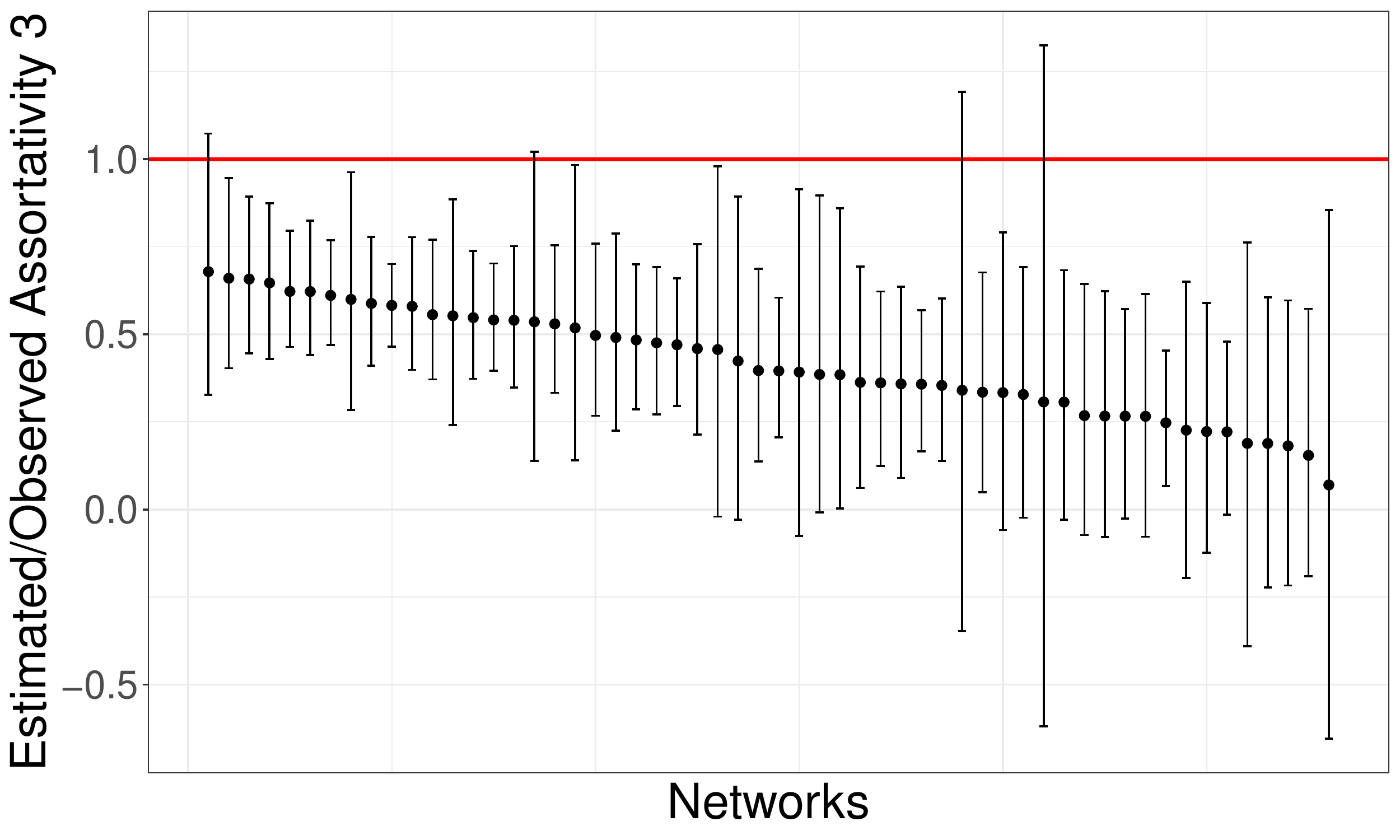}
         \caption{Race}
     \end{subfigure}
    \caption{The distribution of predicted over observed ratio of assortativities on paths of length 2 (left column), $\frac{\hat{r}^{(2)}}{r^{(2)}}$, and 3 (right column), $\frac{\hat{r}^{(3)}}{r^{(3)}}$, from our model along gender (top row) and racial (bottom row) groups. Bars correspond to 95\% confidence interval and each bar corresponds to one school network. Networks are sorted in descending order of the point estimate.}
    \label{fig:mixed_dcsbm_observed_vs_predicted_assorts}
\end{figure}

Figure \ref{fig:mixed_dcsbm_observed_vs_predicted_assorts} compares the distribution of assortativites generated by the model against the observed values along gender and racial groups. The figure is produced exactly as figure \ref{fig:dcsbm_observed_vs_predicted_assorts}, with the same networks and attributes, except that the fitted model accounts for brokerage. First, we observe that in contrast to regular DCSBM, our model accurately captures assortativity along paths of length 2 for both attributes. This is expected since our model fitted through MLE matches the observed frequency of paths of length 2 in expectation. Second, Even though our model does not make any guarantees about assortativity on longer paths, it nevertheless provides a close match with observed assorativity on paths of length 3, at least along gender groups. The observed gender assortativity on paths of length 3 is not significantly different from the generated distribution by the model in any of the networks. However, the model consistently underestimates racial assortativity along paths of length 3. This is mainly because the the absolute level of assortativity along race is much smaller than gender, with values that are often close to zero (only 12 out 56 networks have racial assortativity greater than 0.1). At such small values the model requires higher precision and small absolute differences can make its predictions significantly different relative to the observations. Nevertheless, comparing the distribution of generated racial assortativities along paths of length 3 in figure \ref{fig:mixed_dcsbm_observed_vs_predicted_assorts} with figure \ref{fig:dcsbm_observed_vs_predicted_assorts}, we observe that our model's predictions are at least an order of magnitude closer to observations than DCSBM.

\section{Model Selection}
\label{sec:model_selection}

When inter-group linking propensities are homogenous within each group, i.e. little or no brokerage, DCSBM is a better model than our model since our model will lead overfitting since it has more parameters. However, substantial level of brokerage in cross-group linking justifies the use of our model over DCSBM. In such situations, model selection allows us to pick the right model. Log likelihood ratio tests whether our model captures salient patterns in the networks, in ways that lead to statistically significant improvements in its goodness of fit over DCSBM as the null model.
In deriving the log-likelihood ratio, we note that DCSBM is nested inside our model since it is obtained by imposing a homogeneity constraint on cross-group linking propensities of our model leading to a single out-degree, $\theta_j^o$, and a single in-degree, $\theta_j^i$, parameter for each node:
\begin{align}
    & \forall j \in N, \; \forall r,s \in G: \quad \theta_{j,r}^o = \theta_{j,s}^o = \theta_j^o, \quad  \theta_{j,r}^i = \theta_{j,s}^i = \theta_j^i
\end{align}

The log-likelihood ratio statistic comparing our model to regular DCSBM can be expressed as:
\begin{align}
    \hat{\lambda} = \log{\frac{\sup_{(\Theta, \Omega) \in \mathcal{P} } L(\Theta, \Omega; \mathbf{A})}{\sup_{(\Theta, \Omega) \in \mathcal{P}_0} L(\Theta, \Omega; \mathbf{A})}}
\end{align}
where $\mathcal{P}_0$ and $\mathcal{P}$ denote the restricted and full model parameter spaces respectively.
Large values of $\hat{\lambda}$ test statistic indicates support for the full model that it provides statistically significant improvements over DCSBM.

Since DCSBM as the null is a special case of our model as the alternative, one could appeal to the Wilks theorem \citep{bickel2015mathematical} which states that the test statistic $-2\log(\hat{\lambda})$ is asymptotically distributed as chi-squared with the number of constraints that we must impose on our model to obtain DCSBM as its degrees of freedom. This type of hypothesis testing that uses approximate likelihood ratio chi-squared statistic for network models has been used before \citep{wang1987}. However, the classical results on $\chi^2$ distribution is not valid in this case, as the number of parameters in both the null and alternative increase with $n$. More importantly, the difference in dimensionality of null and alternative is $|G|(|N|-1)$ since our model now contains a degree parameter to each group as opposed to a single degree parameter in the null. Wilks theorem is not valid in this scenario, since the difference in dimensionality increases with the sample size, a point made much earlier in \citep{fienberg1981} about the growing number of parameters in $p_1$ network models.

The null distribution of goodness of fit tests in similar applications with growing number of parameters has been developed in the literature, as was also hinted at \citep{fienberg1981}. A related problem to the analysis of the LLR here is the development of goodness-of-fit tests for large multinomials when the number of cells increases with the sample size. It was shown that the log-likelihood ratio (LLR) statistic for such a scenario follows a normal distribution whose mean and variance is unrelated to the classical chi-squared prediction \citep{zelterman1987goodness,koehler1980empirical}. More recently, asymptotic normality of LLR with maximum likelihood and variational approximations was shown in the context of simple SBM without degree corrections \citep{bickel2013asymptotic}. A recent analysis investigated the null distribution of a very relevant LLR statistic to our development here \citep{yan2014model}. This work addresses the issue of model selection between regular SBM and DCSBM and establishes the asymptotic normality of LLR whose mean and variance depend on the sparsity of the network. The issues encountered in our analysis and \citep{yan2014model} are similar; however as opposed to the development in \citep{yan2014model} where the LLR null model is the SBM, the null model we are testing against is DCSBM. This makes the derivation of asymptotic distribution more challenging since the number of parameters in our null model grows with the sample size whereas the number of parameters in SBM is fixed. Nevertheless, our development of the LLR and its asymptotic distribution was very much influenced by this recent work \citep{yan2014model}. Similar to their work, we establish the asymptotic normality of LLR and show it has a slightly larger mean if the network is sparse.

\subsection{Asymptotic Normality of the Log-Likelihood Ratio}
We start by deriving the expression for the LLR statistic, as the log ratio between the maximum likelihood estimates from our model and estimates from DCSBM.
\begin{align*}
    \hat{\lambda} &=
    \log{
        \frac
        {\prod_{i,j \in N} (\widehat{\theta}_{i,g_j}^{o} \widehat{\theta}_{j, g_i}^{i} \widehat{\omega}_{g_i g_j})^{A_{ij}} \exp(-\widehat{\theta}_{i,g_j}^{o} \widehat{\theta}_{j, g_i}^{i} \widehat{\omega}_{g_i g_j})}
        {\prod_{i,j \in N} (\widehat{\theta}_{i}^{o} \widehat{\theta}_{j}^{i} \widehat{\omega}_{g_i g_j})^{A_{ij}} \exp(-\widehat{\theta}_{i}^{o} \widehat{\theta}_{j}^{i} \widehat{\omega}_{g_i g_j})}
    }\\
    &= \sum_{i,j \in N} \Big[ A_{ij} ( \log{ \frac{d_{i,g_j}^o}{m_{g_i g_j}} } + \log{\frac{d_{j,g_i}^i}{m_{g_i g_j}} } - \log{m_{g_i g_j}} ) - \frac{d_{i,g_j}^o}{m_{g_i g_j}} \frac{d_{j,g_i}^i}{m_{g_i g_j}} m_{g_i g_j} \Big] \\
    &- \sum_{i,j \in N} \Big[ A_{ij} ( \log{ \frac{d_{i}^o}{d_{g_i}^o}} + \log{\frac{d_{j}^i}{d_{g_j}^i}} - \log{m_{g_i g_j}} ) - \frac{d_{i}^o}{d_{g_i}^o} \frac{d_{j}^i}{d_{g_j}^i} m_{g_i g_j} \Big] \\
    &= \sum_{i,j \in N} \Big[ d_{i,g_j}^o \log{d_{i,g_j}^o} +  d_{j,g_i}^i \log{d_{j,g_i}^i} - d_{i,g_j}^o \log{m_{rs}} -  d_{j,g_i}^i \log{m_{rs}} \Big] \\
    &- \sum_{i,j \in N} \Big[ d_{i}^o \log{d_{i}^o} +  d_{j}^i \log{d_{j}^i} - d_{i}^o \log{d_{g_i}^o} -  d_{j}^i \log{d_{g_j}^i} \Big]
\end{align*}
where we plugged in the maximum likelihood estimates for both our model and the DCSBM in the second line. The log-likelihood ratio simplifies to the following form involving individual node and group degrees. The quantities below were all defined in section \ref{sec:model_prelim}.
\begin{align}
\begin{split}
    \hat{\lambda} &= \sum_{\substack{i \in N \\ g \in G}} \Big[ d_{i,g}^o \log{d_{i,g}^o} +  d_{i,g}^i \log{d_{i,g}^i} \Big] - \sum_{i \in N} \Big[ d_{i}^o \log{d_{i}^o} +  d_{i}^i \log{d_{i}^i} \Big] \\
    &- \sum_{r,s \in G} 2m_{rs} \log{m_{rs}}  + \sum_{r \in G}\Big[ d_r^o \log{d_r^o} + d_r^i \log{d_r^i} \Big]
\end{split}
\label{eq:llr}
\end{align}

Under the null and assuming DCSBM is the true model, then each term in equation (\ref{eq:llr}) will have a Poisson distribution whose parameters depend on the true DCSBM model:
\begin{align}
\begin{aligned}
& d_{i,g}^o \sim \text{Poisson}(\theta_i^o \omega_{g_i g})
&& d_{i,g}^i \sim \text{Poisson}(\theta_i^i \omega_{g g_i})  \\
& d_{i}^o \sim \text{Poisson}(\theta_i^o \textstyle\sum_{g \in G}\omega_{g_i g}) 
&& d_{i}^i \sim \text{Poisson}(\theta_i^i \textstyle\sum_{g \in G} \omega_{g g_i}) \\
& d_r^o \sim \text{Poisson}(\textstyle\sum_{s \in G}\omega_{rs}) && d_r^i \sim \text{Poisson}(\textstyle\sum_{s \in G}\omega_{sr}) &&& m_{rs} \sim \text{Poisson}(\omega_{rs})
\end{aligned}
\label{eq:poisson_expectations}
\end{align}
Each term in equation (\ref{eq:llr}) is an independent Poisson random variable. As long as a weak notion of non-sparsity holds, namely that the expected (in-)out-degree of each node to each group, $\theta_i^o \omega_{g_i g}$ and $\theta_i^i \omega_{g g_i}$, does not shrink as network grows, then Lindeberg central limit theorem holds and $\hat{\lambda}$ will approach a normal distribution as $n \to \infty$. This is justified since as the number of nodes increase, then individual node and group degree terms become independent.

\subsection{Expectation of the Log-Likelihood Ratio}
Using equation (\ref{eq:llr}) and the expected value of its terms in equation (\ref{eq:poisson_expectations}), we can write the expected value of the LLR under the null that DCSBM is the true model:
\begin{align}
\begin{split}
    \EE{\hat{\lambda}} &= \sum_{\substack{i \in N \\ g \in G}} \Big[ f(\theta_i^o \omega_{g_i g}) +  f(\theta_i^i \omega_{g g_i}) \Big] -
    \sum_{i \in N} \Big[ f(\theta_i^o \textstyle\sum_{g \in G} \omega_{g_i g}) +  f(\theta_i^i \textstyle\sum_{g \in G} \omega_{g g_i}) \Big] \\
    &- \sum_{r,s \in G} 2f(\omega_{rs})  + \sum_{r \in G}\Big[ f(\textstyle\sum_{s \in G} \omega_{rs}) + f(\textstyle\sum_{s \in G} \omega_{sr}) \Big]
\end{split}
\label{eq:llr_exp}
\end{align}
where we have defined $f(\mu) = \EE{X\log{X}}$ for $X \sim \text{Poisson}(\mu)$.

\begin{theorem} If the following conditions holds,
\begin{align*}
    & \forall i \in N, \forall g \in G \quad
    \theta_i^o \omega_{g_i g} \to \infty
    \quad \text{as} \quad n \to \infty \\
    & \forall i \in N, \forall g \in G \quad
    \theta_i^i \omega_{g g_i} \to \infty
    \quad \text{as} \quad n \to \infty
\end{align*}
then, $\EE{\hat{\lambda}} = (|G| - 1)(|N| - |G|)$
\label{theorem:llr_exp}
\end{theorem}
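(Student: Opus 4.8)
The plan is to reduce the whole computation to the scalar asymptotics of the function $f(\mu)=\EE{X\log X}$ for $X\sim\text{Poisson}(\mu)$ that appears in eq.~(\ref{eq:llr_exp}). The single genuine analytic ingredient — and the step I expect to be the main obstacle, since the additive constant must be pinned down \emph{exactly} — is the expansion
\[
f(\mu)=\mu\log\mu+\tfrac12+O(\mu^{-1})\qquad(\mu\to\infty).
\]
I would derive this from the Poisson size-biasing identity $\EE{Xg(X)}=\mu\,\EE{g(X+1)}$ applied with $g=\log$ (the convention $0\log 0=0$ makes the left-hand side equal $\sum_{k\ge1}k\log k\,P(X=k)$, which is what occurs in eq.~(\ref{eq:llr})). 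Writing $X+1=\mu\bigl(1+Y/\mu\bigr)$ with $Y=X+1-\mu$, so $\EE Y=1$ and $Var(Y)=\mu$, and expanding $\log(1+Y/\mu)=Y/\mu-Y^{2}/(2\mu^{2})+\cdots$, one gets $\EE{\log(X+1)}=\log\mu+\tfrac1{2\mu}+O(\mu^{-2})$, hence $f(\mu)=\mu\,\EE{\log(X+1)}=\mu\log\mu+\tfrac12+O(\mu^{-1})$. The interchange of expectation and Taylor expansion is justified by Poisson concentration: split on the event $\{|X-\mu|\le\mu^{2/3}\}$ and bound $\log(X+1)$ crudely on its (super-polynomially unlikely) complement.

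Granting this lemma, substitute $f(\mu)=\mu\log\mu+\tfrac12+O(\mu^{-1})$ into eq.~(\ref{eq:llr_exp}), using eq.~(\ref{eq:poisson_expectations}) to identify the mean attached to each argument of $f$, and split $\EE{\hat\lambda}=A+B+E$, where $A$ collects all the $\mu\log\mu$ leading terms, $B$ the signed copies of $\tfrac12$, and $E$ the $O(\mu^{-1})$ remainders. The constant part is pure bookkeeping: the signed multiplicities of $\tfrac12$ are $+1$ on each of the $2|N||G|$ node--group terms, $-1$ on each of the $2|N|$ node terms, $-2$ on each of the $|G|^{2}$ terms $f(\omega_{rs})$, and $+1$ on each of the $2|G|$ group terms, so
\[
B=\tfrac12\bigl(2|N||G|-2|N|-2|G|^{2}+2|G|\bigr)=|N|(|G|-1)-|G|(|G|-1)=(|G|-1)(|N|-|G|).
\]

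Next I would show $A=0$ by an exact algebraic identity. By the symmetry between the incoming and outgoing directions it suffices to treat the outgoing half. Put $\mu^{o}_{i,g}=\theta^{o}_{i}\omega_{g_{i}g}$ and $S_{r}=\sum_{s}\omega_{rs}$, so that $\sum_{g}\mu^{o}_{i,g}=\theta^{o}_{i}S_{g_{i}}$; note that consistency of eq.~(\ref{eq:poisson_expectations}) (e.g.\ $d^{o}_{r}=\sum_{i\in r}d^{o}_{i}$ being $\text{Poisson}(\sum_s\omega_{rs})$) forces the group-level normalization $\sum_{i\in r}\theta^{o}_{i}=1$ for every $r$, and likewise $\sum_{i\in r}\theta^{i}_{i}=1$. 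Expanding each $\mu\log\mu$ as a $\theta\log\theta$ piece plus an $\omega\log\omega$ (or $S\log S$) piece, the cancellations are: the $\sum_{r}S_{r}\sum_{i\in r}\theta^{o}_{i}\log\theta^{o}_{i}$ contributions coming from the node--group sum and from the node sum are equal and cancel; the $\sum_{r,s}\omega_{rs}\log\omega_{rs}$ contribution from the node--group sum cancels the outgoing half of the $\sum_{r,s}2\,\omega_{rs}\log\omega_{rs}$ term; and the $\sum_{r}S_{r}\log S_{r}$ contribution from the node sum cancels the outgoing half of the last sum. Hence the outgoing half of $A$ vanishes, and by the in/out symmetry so does the incoming half, so $A=0$.

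Finally, I would argue the remainder is negligible. The hypotheses give $\theta^{o}_{i}\omega_{g_{i}g}\to\infty$ and $\theta^{i}_{i}\omega_{gg_{i}}\to\infty$; since $\sum_{i\in r}\theta^{o}_{i}=1$ forces $\min_{i\in r}\theta^{o}_{i}\le 1/n_{r}$, this in turn forces $\omega_{rg}\to\infty$, hence $S_{r}\to\infty$ and $\theta^{o}_{i}S_{g_{i}}\to\infty$ (and similarly for the incoming quantities), so \emph{every} argument of $f$ in eq.~(\ref{eq:llr_exp}) diverges. Each of the $O(|N||G|)$ remainder terms is then $O(\mu^{-1})=o(1)$, so $E=o\bigl((|G|-1)(|N|-|G|)\bigr)$, and under a uniform lower bound on these means with $|G|$ fixed even $E=o(1)$. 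Combining the three pieces gives $\EE{\hat\lambda}=A+B+E=(|G|-1)(|N|-|G|)+o(\cdot)$, which is the asserted identity. The only delicate point in the whole argument is obtaining the constant $\tfrac12$ in the scalar lemma with a usable error bound; everything after that is the multiplicity count and the exact cancellation in Step~3.
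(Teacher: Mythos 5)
Your proof follows essentially the same route as the paper's: substitute the two-term expansion $f(\mu)=\mu\log\mu+\tfrac12+O(\mu^{-1})$ into eq.~(\ref{eq:llr_exp}), observe that the $\mu\log\mu$ terms cancel exactly via the normalization $\sum_{i\in r}\theta_i^o=\sum_{i\in r}\theta_i^i=1$, and count the signed $\tfrac12$'s to obtain $(|G|-1)(|N|-|G|)$. You additionally supply details the paper leaves implicit --- the size-biasing derivation of the expansion, the check that every argument of $f$ diverges under the stated conditions, and an explicit account of the remainder terms --- so the argument is correct and, if anything, more careful than the published one.
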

\begin{proof} 
Taylor series expansion of $f(\mu)$ around $\mu$ becomes:
\begin{align}
    f(\mu) = \mu \log{\mu} + \frac{1}{2} + \frac{1}{12\mu} + \frac{1}{12\mu^2} + O(\frac{1}{\mu^3})
    \label{eq:taylor_series}
\end{align}
The condition implies we can only keep the first two terms from the Taylor series expansion of all quantities in equation (\ref{eq:llr_exp}).
\begin{align*}
    \EE{\hat{\lambda}} &= \sum_{\substack{i \in N \\ g \in G}} \Big[ \theta_i^o \omega_{g_i g} \log{(\theta_i^o \omega_{g_i g})} +  \theta_i^i \omega_{g g_i} \log{(\theta_i^i \omega_{g g_i})} + 1\Big] \\
    &- \sum_{i \in N} \Big[ \theta_i^o \textstyle\sum_{g \in G} \omega_{g_i g} \log{(\theta_i^o \textstyle\sum_{g \in G} \omega_{g_i g}) }  +  \theta_i^i \textstyle\sum_{g \in G} \omega_{g g_i} \log{(\theta_i^i \textstyle\sum_{g \in G} \omega_{g g_i}) }  + 1 \Big] \\
    &- \sum_{r,s \in G} \Big[2\omega_{rs} \log{\omega_{rs}} + 1 \Big] \\
    &+ \sum_{r \in G}\Big[ \textstyle\sum_{s \in G} \omega_{rs} \log{( \textstyle\sum_{s \in G} \omega_{rs})}  + \textstyle\sum_{s \in G} \omega_{sr} \log{( \textstyle\sum_{s \in G} \omega_{sr})}  \Big]
\end{align*}
The equation above simplifies to $ (|G| - 1)(|N| - |G|)$ by using the constraints 
\begin{align*}
\forall r,s \in G: \quad \sum_{i \in r} \theta_{i,s}^o = 1, \quad \sum_{i \in r} \theta_{i,s}^i = 1
\end{align*}
\end{proof} 
Theorem \ref{theorem:llr_exp} states that in the limit of dense networks, the expected value of LLR essentially matches the value predicted by Wilks theorem since $ (|G| - 1)(|N| - |G|)$ is in fact the number of constraints one can put on parameters of the model to recover DCSBM.

As it relates to the distribution of LLR and according to the conditions in theorem \ref{theorem:llr_exp}, a network is considered to be sparse if:
\begin{align*}
\exists i \in N \;\, g \in G: \quad\; \theta_i^o \omega_{g_i g}=O(1) \quad or \quad \theta_i^i \omega_{g g_1} = O(1)
\end{align*}
The expected value of LLR in a sparse network will be larger than a corresponding dense network, suggesting that in the case of sparse networks the risk of overfitting and rejecting a true DCSBM is higher. Nevertheless, one can obtain an accurate value for expected value of LLR by using the the Taylor series expansion of $f(\mu)$ in equation (\ref{eq:taylor_series}) including its higher order terms and conducting the sum in equation (\ref{eq:llr_exp}) numerically.

\begin{figure}[t]
\centering
\includegraphics[width=0.97\textwidth]{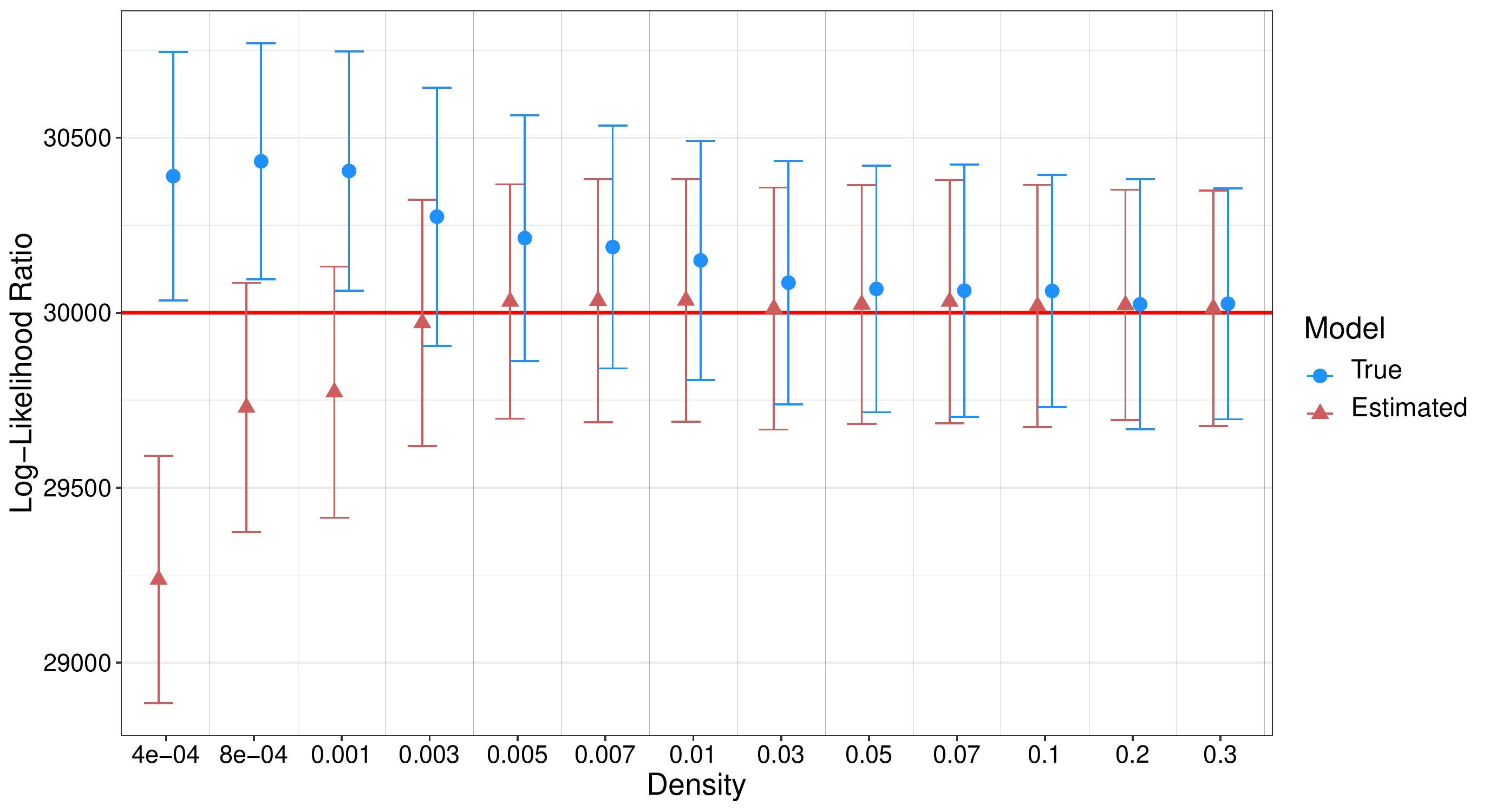}
\caption{The distribution of LLR in DCSBM generated networks with 30,000 nodes and varying density. True (estimated) model distributions are generated by sampling from the true (estimated) model. Bars correspond to two standard errors.}
\label{fig:analytical_vs_montecarolo_llr_large}
\end{figure}

Derivation of LLR variance under the null is more complicated than its expected value and does not lead to a convenient closed from solution. The appendix provides detailed analysis of the variance including a numerical approximation method using Taylor series similar to what we did for the expected value of LLR and compares it against the  Monte-Carlo method used here to estimate the variance in large networks.

\subsection{Null Distribution of LLR through Simulations}
In this section, we investigate the null distribution of the log-likelihood ratio in synthetic networks and show how it varies by network density. The synthetic networks have 30,000 nodes divided into two equal-sized groups and are all generated by DCSBM models. For each given value of density, we generate a DCSBM model that matches that density in expectation and generate its LLR distribution under the null. The $\Omega$ parameter of the DCSBM model is chosen such that the total number of edges in the network matches the requested density in expectation and 70\% of edges are in-group (i.e. 35\% within each group) and the remaining 30\% are out-group (15\% in each direction). $\Theta$ parameters are randomly generated according to a truncated power law ranging from 1 to 30,000 with exponent -0.3 and then normalized so that they sum up to 1 within each group. This procedure would create significant degree heterogeneity in the DCSBM model. 

Figure \ref{fig:analytical_vs_montecarolo_llr_large} illustrates how the null distribution of the log-likelihood ratio varies by the network density. The distributions are constructed through Monte Carlo sampling from the DCSBM model. The bars correspond to two standard errors around the expected LLR. We make the following observations regarding the blue bars that correspond to LLR distribution from the true model. First, the expected value of LLR matches the chi-squared prediction by Wilks theorem when the network is dense. However as explained above, the expected value of LLR is larger than this classical prediction for sparse networks. This is due to the fact that in sparse networks, slight random variations in cross-group linking among the nodes can be incorrectly picked up as brokerage patterns due to the small number of such links. Thus in order to reject the null, the evidence for presence of brokerage must be stronger for sparse networks than dense ones. Second, even though we don't have an analytical confirmation, the variance of LLR seems to be stable across different densities. In particular, variance for all network densities is about 30,000, again matching the chi-squared prediction, and does not vary by more than 4\% from this value. Finally, the chi-squared prediction from Wilks theorem seems to fit the LLR distribution well for dense graphs even though the theorem does not technically apply. In fact, a chi-squared distribution with such large degrees of freedom should approach the asymptotically normal distribution of LLR as discussed above.

\subsection{LLR Inference}
Figure \ref{fig:analytical_vs_montecarolo_llr_large} also includes the null distribution of LLR constructed from an estimated rather than the true model. This resembles common scenarios in practice where the true model is not available, thus the LLR distribution should be constructed from the sample. The red bars correspond to the parametric bootstrap  \citep{davison1997bootstrap} constructed from a single sample taken from the DCSBM model. We first draw a network from the true DCSBM model corresponding to each density, obtain its estimated parameters in DCSBM from maximum likelihood, then repeatedly draw new networks from the estimated model and compute their $\hat{\lambda}$ to generate its distribution under the null.

The bootstrap distribution matches the true sampling distribution for dense networks. However as the network gets sparser, the bootstrap (and analytical) distribution  underestimates the true LLR distribution which could lead to high type I error. This happens due to the combination of two factors. First, in the sparse regime, the higher order terms in Taylor series expansion of each term in equation (\ref{eq:llr_exp}) become non-trivial and should be included when estimating the expected value of LLR.  Second, the parameter estimates and in particular node degree parameters ($\hat{\theta}_i^o$ and $\hat{\theta}_i^i$), are not consistent in the sparse regime. Thus, using them in equation (\ref{eq:llr_exp}) won't lead to a consistent estimator for expectation of LLR either.
The appendix provides more detail on the bias of LLR distribution generated from a sample.

The difficulty with estimating LLR distribution in sparse networks can also be explained by the ``effective sample size'' of networks. The effective data size in dense graphs is of order $O(n^2)$ and even though there are $O(n)$ parameters in DCSBM, the estimated model parameters would still be in the large data limit and consistent \citep{Krivitsky2015question,yan2014model}. However,
in sparse graphs, the effective samples size and number of parameters grow at the same rate of $O(n)$. As there is only $O(1)$ observations per each parameter, we won't have consistent estimators for DCSBM parameters. 
Thus for sparse networks, neither analytical nor bootstrap LLR distribution that are constructed using estimated parameters match the correct distribution using the true parameters. Fundamentally, the problem is that the distribution of LLR becomes dependent on the parameters in the sparse network regime, as opposed to the dense regime where expected value of LLR is a constant that only depends on network size and number of groups (theorem \ref{theorem:llr_exp}). Thus, the plug-in estimator for the expected value of LLR is not consistent and this makes inference and testing impossible for sparse networks. Most social networks fall in the sparse regime, since average degree of nodes remains fixed as more nodes are added to the network. For such networks, new techniques are needed to estimate LLR distribution given that the current model selection suffers from high type I error.

%At first glance, it may seem appropriate to use the $\chi^2$ distribution of Wilks theorem assumes that the log-likelihood of both the null and alternative models are well-behaved and resemble a quadratic function close to their maximum \citep{bickel2015mathematical}. The justification for this assumption is the central limit theorem together with a growing ``effective large sample size''.

%\begin{figure}
%     \centering
%     \begin{subfigure}[b]{\textwidth}
%         \centering
%         \includegraphics[width=0.47\textwidth]{}
%         \hspace{0.04\textwidth}
%         \includegraphics[width=0.47\textwidth]{}
%     \end{subfigure}
%    \caption{The bootstrapped distribution of log-likelihood ratio scaled by the actual log-likelihood ratio (LLR) when the true model is actually a DCSBM fit. Each of 56 DCSBM models corresponds to the MLE of a school network based on gender (left) and ethnicity (right) groups. For each fitted DCSBM, we draw 5 random networks and generate the bootstrap distribution of the LLR from that realization. The 280 synthetic networks are sorted by the mean of their LLR bootstrap distribution. Each bar corresponds to 95\% confidence interval of one network.}
%    \label{fig:likelihood_ratio_dist_bootstrap_test}
%\end{figure}

%\section{Related Work}
%\label{sec:related_work}
%\input{related_work}

\section{Conclusion}
\label{sec:conclusion}

Network models are increasingly used to study various social phenomena ranging from segregation \citep{diprete2011segregation, henry2011emergence},  clustering \citep{handcock2077model} and homophily \citep{mccormick2015latent} to employment outcomes \citep{jackson2004}. All such phenomena are either directly or indirectly related to biases in link formation in networks. Degree-Corrected Stochastic Block Model (DCSBM) is a random network model for estimating such biases and detecting the communities that arise from it. While DCSBM is successful in detecting communities and capturing homophily, it does not generate networks that match higher order homophily of the observed network. In this paper, we argue that matching higher order assortativities is important in social networks if we are concerned about the extent of (unequal) diffusion from one group to another and show empirically, based on a collection of school networks, that DCSBM significantly over-estimates the number of paths of length 2 or 3 between groups in social networks. We attribute this to unequal propensity in forming cross-group edges between members of a group, a phenomena referred to as brokerage in social network literature. brokers act as a bottleneck and networks with such nodes will have fewer paths between groups than networks whose cross-group edges are distributed more equally. We present a model based on DCSBM whose generated assortativity on paths of length 1 and 2 is consistent with the observed network. Even though the model does not make any guarantees in terms of assortativity on longer paths, we show empirically that the generated assortativity on paths of length 3 by our model is significantly more accurate than DCSBM.
This suggests that perhaps the most important factor behind unequal diffusion is simply the variation in the number of cross-group edges, which is fully accounted for in our mixed-propensity model.

Finally, we address the goodness of fit for our model versus DCSBM that does not account for brokerage. We characterize the distribution of the log likelihood ratio statistic and show that it is asymptotically normal. Even though the classical chi-squared distribution does not apply due to increasing number of parameters, the asymptotic distribution of LLR does match Wilks theorem predictions, but only in the dense network regime. This makes inference possible for dense graphs as the LLR null distribution does not depend on the true parameters. We show that LLR is still asymptotically normal with sparse networks, however it has a slightly larger mean to account for higher potential of overfitting. More importantly, the mean of LLR for sparse graphs depends on the unknown true parameters. We show analytically and empirically that a plug-in estimator will underestimate the LLR distribution for sparse networks since the maximum likelihood estimator of DCSBM parameters is not consistent for sparse graphs. Effectively, this makes model selection inference with plug-in estimators impossible in the sparse regime. This is particularly inconvenient as most social networks fall in the sparse regime due to the limited number of connections each individual can maintain. However, it may be possible to derive a consistent estimator for the LLR distribution in sparse networks, since its mean and variance solely depend on an aggregate function of model parameters. While the estimator for each parameter is inconsistent, it may be possible to develop a consistent estimator for their aggregate function. We believe this technique will be useful in other applications related to sparse networks and leave this topic as future work.

\section{Acknowledgements}
\label{sec:acknowledgements}
EJ was partly supported by NSF fellowship. This material is based upon work supported by the National Science Foundation Graduate Research Fellowship under
Grant No. 1122374. Any opinion, findings, and conclusions or recommendations expressed in this material are those of the authors(s) and do not necessarily reflect the views of the National Science Foundation.

\bibliographystyle{apalike}

\bibliography{references}

\section{Appendix}
\label{sec:appendix}

\subsection{Frequency Diffusion Paths Under MLE Model}
In this section, we provide extra analysis on the bias of model generated paths of length 2.
\subsubsection{Self-Loops:} In the main text, we assume that the observed networks do not have self-loops, even though the model allows for it and can certainly generate networks with self-loops. While we assume the first-order observed network does not have self-loops, its higher order networks do (imagine paths of length 2 that start with and end in the same node) and counting them is necessary to obtain an unbiased estimate of diffusion paths. Equation \ref{eq:diff_group_pl2} shows that expected number of the model generated paths of length 2 between any two different groups is the same as the value in the observed network. Similarly, \ref{eq:same_group_pl2} shows that if the observed network does not have any self-loops, the expected number of paths of length 2 between nodes of the same group matches the observed network.

The presence of self-loops has no effect on the observed or expected number of paths of length 2 between two distinct groups.
However, the presence of self-loops in the observed network leads to a positive bias in the number of in-group paths and consequently the estimated higher order assortativity as we show below.
In the main text, we assumed the number of observed paths of length 2 within a group is $\sum_{j} d_{j,r}^{i} d_{j,r}^{o}$. However in presence of self-loops, this value becomes:
\begin{align}
\begin{split}
    P_{rr}^{(2)} &= \sum_{j} \sum_{\substack{i,k \in r \\ i \neq k}} A_{ij} A_{jk} + \sum_{j} \sum_{\substack{i \in r \\ i \neq j}} A_{ij} A_{ji} + \sum_{j \in r} A_{jj} (A_{jj} - 1) \\
    &= \sum_{j} \sum_{i,k \in r } A_{ij} A_{jk} - \sum_{j \in r} A_{jj} \\
    &= \sum_{j} d_{j,r}^{i} d_{j,r}^{o} - \sum_{j \in r} A_{jj} 
\end{split}
\label{eq:same_group_pl2_selfloop}
\end{align}
Comparing equation (\ref{eq:same_group_pl2}) of the main text with  equation (\ref{eq:same_group_pl2_selfloop}) above indicates that expected number of in-group paths of length 2 generated by the MLE model has a positive bias, equal to the number of self-loops in the group,  when compared against the  corresponding observed value in presence of self-loops. This also implies that model generated assortativity on paths of length 2 will be higher than the observed assortativity in finite networks. However, the size of this bias compared the total number of in-group paths vanishes as the network grows larger and assortativity is nevertheless consistent as shown in the main text.

\subsection{Variance of the Log-Likelihood Ratio}
The main text establishes the asymptotic normality of LLR under DCSBM as the true model and derives it expected value. We develop the variance of LLR in this section and introduce an approximation method based on Taylor series expansion similar to what we used for its expected value. In contrast to the expected value of LLR, we do not obtain a convenient closed from expression for variance of LLR analytically and instead suggest to estimate it empirically.

Variance of LLR becomes complicated since the covariance between many terms, for example out-degree from a group and the the number of edges from that group to another, is non-zero. Monte Carlo methods such as parametric bootstrap would be an attractive alternative to estimate the variance of LLR. In fact, the LLR variance estimates in the main text are obtained through Monte Carlo using the true model parameters. In this section, we provide an analytical expression for variance and compare it against the estimates from Monte Carlo. The analytical method is computationally intensive, thus we conduct the comparison on moderate sized networks with 500 nodes. Using the LLR in equation (\ref{eq:llr}) and the expected value of its terms in equation (\ref{eq:poisson_expectations}), we can derive the variance of the LLR when DCSBM as the null is the true model. We will use the following auxiliary functions in the expression for variance to make it more readable:
\begin{align*}
& a(\mu) = \text{var}(X \log X) \\
& b(\mu, \lambda) = \text{cov}(X \log X, (X+U) \log(X+U)) \\
& c(\mu, \lambda, \gamma) = \text{cov}((X+U) \log(X+U), (X+W) \log(X+W)) \\
& \text{when} \quad\quad X \sim \text{Poisson}(\mu), \quad X+U \sim \text{Poisson}(\lambda), \quad X+W \sim \text{Poisson}(\gamma)
\end{align*}

\begin{align}
\begin{split}
\text{Var}(\hat{\lambda}) = \sum_{r,s} \sum_{i \in r} & \Big[  a(\theta_i^o \omega_{rs}) + a(\theta_i^i \omega_{sr}) \\
& - 4b(\theta_i^o \omega_{rs}, \omega_{rs}) - 4b(\theta_i^i \omega_{sr}, \omega_{sr})   \\
& - 2b(\theta_i^o \omega_{rs}, \theta_i^o  \textstyle \sum_g \omega_{rg}) - 2b(\theta_i^i \omega_{sr}, \theta_i^i \textstyle \sum_g \omega_{gr}) \\
& + 2b(\theta_i^o \omega_{rs}, \textstyle  \sum_g \omega_{rg}) + 2b(\theta_i^i \omega_{sr}, \textstyle \sum_g \omega_{gr}) \\
& + 2b(\theta_i^o \omega_{rs}, \textstyle  \sum_g \omega_{gs}) + 2b(\theta_i^i \omega_{sr}, \textstyle \sum_g \omega_{sg}) \\
& + 4c(\omega_{rs}, \theta_i^o \textstyle \sum_g \omega_{rg}, \theta_i^o \omega_{rs}) + 4c(\omega_{sr}, \theta_i^i \textstyle \sum_g \omega_{gr}, \theta_i^i \omega_{sr}) \\
& - 2c(\textstyle \sum_g \omega_{gs}, \theta_i^o \textstyle \sum_g \omega_{rg}, \theta_i^o \omega_{rs}) - 2c(\textstyle \sum_g \omega_{sg}, \theta_i^i \textstyle \sum_g \omega_{gr}, \theta_i^i \omega_{sr}) \Big] \\
+ \sum_{r,s} & \Big[ 4a(\omega_{rs}) \\
& - 4b(\omega_{rs}, \textstyle \sum_g \omega_{rg}) - 4b(\omega_{rs}, \textstyle \sum_g \omega_{gs}) \\
& + 2c(\textstyle \sum_g \omega_{rg}, \textstyle \sum_g \omega_{gs}, \omega_{rs}) \Big] \\
+ \sum_{r} \sum_{i \in r} & \Big[ a(\theta_i^o \textstyle \sum_g \omega_{rg}) + a(\theta_i^i \textstyle \sum_g \omega_{gr}) \\
& - 2b(\theta_i^o \textstyle \sum_g \omega_{rg}, \textstyle \sum_g \omega_{rg}) - 2b(\theta_i^i \textstyle \sum_g \omega_{gr}, \textstyle \sum_g \omega_{gr}) \Big] \\
+ \sum_{r} & \Big[ a(\textstyle \sum_g \omega_{rg}) + a(\textstyle \sum_g \omega_{gr}) \Big] \\
+ \sum_{r,s} \sum_{\substack{i \in r \\ j \in s}} & \Big[ 2c(\theta_i^o \omega_{rs}, \theta_j^i \omega_{rs}, \theta_i^o \theta_j^i \omega_{rs}) \\
& - 2c(\theta_i^o \omega_{rs}, \theta_j^i \textstyle \sum_g \omega_{gs}, \theta_i^o \theta_j^i \omega_{rs}) \\ 
& - 2c(\theta_i^i \omega_{sr}, \theta_j^o \textstyle \sum_g \omega_{sg}, \theta_i^i \theta_j^o \omega_{sr}) \\ 
& + 2c(\theta_i^o \textstyle \sum_g \omega_{rg}, \theta_j^i \textstyle \sum_g \omega_{gs}, \theta_i^i \theta_j^o \omega_{sr}) \Big]
\label{eq:llr_full_variance}
\end{split}
\end{align}

The variance expression above can not be easily converted to a more convenient form. Instead, we can compute it numerically. Given the true model parameters, we can either compute each term numerically or use approximations based on Taylor series expansions of function $a(\mu), b(\mu, \lambda), c(\mu, \lambda, \gamma)$. The former approach leads to an exact value for the variance, however it will be computationally intensive to compute all variance and covariance terms especially in large networks. The latter approach is computationally feasible but can suffer from inaccuracies if the assumptions behind approximations are not valid. The approximation to the variance relies on the following results based on Taylor series expansions:
\begin{align*}
& \text{cov}(X, X \log X) = \mu \log \mu + \mu - {1}{6\mu} \\
& \text{when} \quad X \sim \text{Poisson}(\mu) \quad \& \quad \mu \gg 1 \\
& a(\mu) = \mu \log^2(\mu) + 2\mu \log(\mu) + \mu + \frac{1}{2} + \frac{7 \log (\mu)}{15 \mu} - \frac{1}{6\mu} + \frac{\log (\mu)}{\mu^2} - \frac{13}{144 \mu^2} \\
& \text{when} \quad \mu > 1\\
& b(\mu, \lambda) =  (1 + E[\log U]) \text{cov}(X, X \log X)  \approx (1 + \log \lambda)(\mu \log \mu + \mu -  \frac{1}{6\mu}) \\
& \text{when} \quad \lambda \gg \mu \quad \& \quad \lambda \gg 1 \\
& c(\mu, \lambda, \gamma) = \text{var(X)} \Big[ E[\log U] E[\log W] + E[\log U] + E[\log W] +1 \Big] \\
& \quad\quad\quad\quad \approx \mu \Big[ \log \lambda \log \gamma + \log \lambda + \log \gamma + 1 \Big] \\
& \text{when} \quad \lambda \gg \mu \quad \& \quad \gamma \gg \mu 
\end{align*}

\begin{figure}[t]
\centering
\includegraphics[width=0.8\textwidth]{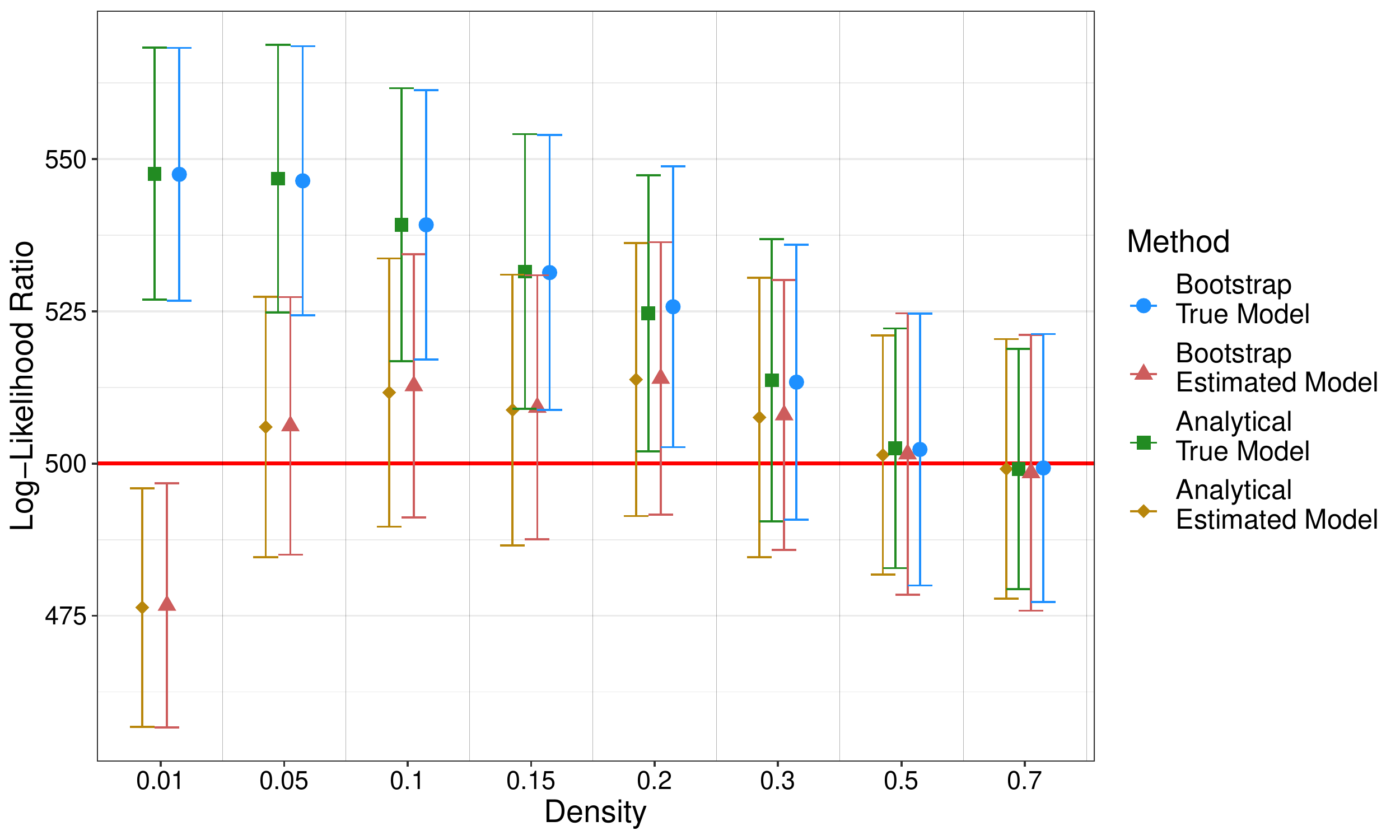}
\caption{Comparison of the analytical vs Monte Carlo methods for generating the distribution of LLR in networks with 500 nodes and varying density. Networks are constructed from a true DCSBM model. True (estimated) model distributions are generated by either sampling from the true (estimated) model or using the true (estimated) parameters in the analytical expression. Bars correspond to one standard error.}
\label{fig:analytical_vs_montecarolo_llr}
\end{figure}
The assumptions behind the approximations are too strict to be valid for common networks. Thus, to validate the Monte-Carlo estimation of variance, we rely on exact numerical values for each variance and covariance term in equation (\ref{eq:llr_full_variance}).
Figure \ref{fig:analytical_vs_montecarolo_llr} compares the analytical distribution of LLR using equations (\ref{eq:llr_exp}) and (\ref{eq:llr_full_variance}) versus those obtained through Monte-Carlo. The comparison is made across DCSBM-generated networks with varying level of density. The generation of network used the same method as referred to in the main text. We can make two main observations from figure \ref{fig:analytical_vs_montecarolo_llr}. First, the expected value and variance estimates of LLR based on a Monte-Carlo that uses the true parameter values for resampling is accurate and close to the values obtained analytically. Second, both the analytical and Monte-Carlo methods that use the estimated parameters depart from the true distribution as the network gets sparser. This issue was explained in the main text (further elaborated below) and was attributed to the lack of consistency in estimation of DCSBM parameters when the network is sparse.

\subsection{Estimating LLR distribution in sparse networks}
In the main text, we illustrated how the distribution of the log-likelihood ratio under the null (DCSBM) constructed from estimated model parameters departs from its true distribution when the network is in the sparse regime. We defined network sparsity in terms of node degree to each group. In particular, a network is considered sparse if there is at least one node whose expected degree to one group remains $O(1)$ as network size grows. This section characterizes the bias in the expected value of LLR under the null, if estimated using a plug-in estimator in equation (\ref{eq:llr_exp}). $\widehat{\EE{\hat{\lambda}}}$ below denotes the estimated LLR expected value using the model estimates where the expectation is taken over both sampled networks from the true model to obtain the estimated model first and then over resamples from the fitted model. In other words, we have $\widehat{\EE{\hat{\lambda}}} = \EE[\hat{\Theta}, \hat{\Omega}]{{\EE{\hat{\lambda} | \hat{\Theta}, \hat{\Omega}}}}$.
\begin{align*}
    \EE{\hat{\lambda}} - \widehat{\EE{\hat{\lambda}}}
    &= \begin{aligned}[t]
        \sum_{\substack{i \in N \\ g \in G}} \Big[ & f(\theta_i^o \omega_{g_i g}) - \EE{f(\hat{\theta}_i^o \hat{\omega}_{g_i g})} \\
        + &  f(\theta_i^i \omega_{g g_i}) - \EE{f(\hat{\theta}_i^i \hat{\omega}_{g g_i})} \Big] 
    \end{aligned} \\
    &- \begin{aligned}[t]
        \sum_{i \in N} \Big[ & f(\theta_i^o \textstyle\sum_{g \in G} \omega_{g_i g})  - \EE{f(\hat{\theta}_i^o \textstyle\sum_{g \in G} \hat{\omega}_{g_i g})} \\
        + &  f(\theta_i^i \textstyle\sum_{g \in G} \omega_{g g_i}) - \EE{f(\hat{\theta}_i^i \textstyle\sum_{g \in G} \hat{\omega}_{g g_i})} \Big] 
    \end{aligned} \\
    &- \sum_{r,s \in G} \Big[ 2f(\omega_{rs}) - 2\EE{f(\hat{\omega}_{rs})} \Big] \\
    &+ \begin{aligned}[t]
        \sum_{r \in G}\Big[ & f(\textstyle\sum_{s \in G} \omega_{rs}) - \EE{f(\textstyle\sum_{s \in G} \hat{\omega}_{rs})} \\
        + & f(\textstyle\sum_{s \in G} \omega_{sr}) - \EE{f(\textstyle\sum_{s \in G} \hat{\omega}_{sr})} \Big]
    \end{aligned}
\end{align*}
where $f(\mu) = \EE{X\log{X}}$ for $X \sim \text{Poisson}(\mu)$ as defined in the main text. We can replace each term above with its Taylor series expansion in equation (\ref{eq:taylor_series}) and note that the sum of first two terms in Taylor expansions lead to $ (|G| - 1)(|N| - |G|)$ for both true and estimated parameters as shown in Theorem \ref{theorem:llr_exp}. Thus, we only need to keep track of the difference between higher order terms. For simplicity, we only account for $O(\frac{1}{\mu})$ terms of the Taylor series expansion below which would be justified if $\mu > 1$. The analysis of the bias will not change if higher order terms are also included.
\begin{align}
\begin{split}
    \EE{\hat{\lambda}} - \widehat{\EE{\hat{\lambda}}}
    &\approx b_1 + b_2 \\
    b_1 &=
    \begin{aligned}[t]
        \frac{1}{12}\sum_{\substack{i \in N \\ g \in G}} \Bigg[ &\frac{1}{\theta_i^o \omega_{g_i g}} - \EE{\frac{1}{\hat{\theta}_i^o \hat{\omega}_{g_i g}}} \\
        + &  \frac{1}{\theta_i^i \omega_{g g_i}} - \EE{\frac{1}{\hat{\theta}_i^i \hat{\omega}_{g g_i}}} \Bigg] \\
    \end{aligned} \\
    &- \begin{aligned}[t]
        \frac{1}{12}\sum_{i \in N} \Bigg[ & \frac{1}{\theta_i^o \textstyle\sum_{g \in G} \omega_{g_i g}}  - \EE{\frac{1}{\hat{\theta}_i^o \textstyle\sum_{g \in G} \hat{\omega}_{g_i g}}} \\
        + &  \frac{1}{\theta_i^i \textstyle\sum_{g \in G} \omega_{g g_i}} - \EE{\frac{1}{\hat{\theta}_i^i \textstyle\sum_{g \in G} \hat{\omega}_{g g_i}}} \Bigg] \\
    \end{aligned} \\
    b_2 &= 
    \begin{aligned}[t]
        \frac{1}{12}\sum_{r \in G}\Bigg[ & \frac{1}{\textstyle\sum_{s \in G} \omega_{rs}} - \EE{\frac{1}{\textstyle\sum_{s \in G} \hat{\omega}_{rs}}} \\
        + & \frac{1}{\textstyle\sum_{s \in G} \omega_{sr}} - \EE{\frac{1}{\textstyle\sum_{s \in G} \hat{\omega}_{sr}}} \Bigg] \\
    \end{aligned} \\
    &- \begin{aligned}[t]
        \frac{1}{12}\sum_{r,s \in G} \Bigg[ 2\frac{1}{\omega_{rs}} - 2\EE{\frac{1}{\hat{\omega}_{rs}}} \Bigg] 
    \end{aligned}
\end{split}
\label{eq:llr_bias_anlaysis}
\end{align}
where we have divided the bias into two terms $b_1$ and $b_2$ and used the approximation instead of inequality since we have only accounted for the higher order terms of the Taylor expansion. First, we note that several of the difference terms above are negative according to the Jensen inequality. Thus the bias will generally be non-zero for sparse networks. Both figures \ref{fig:analytical_vs_montecarolo_llr_large} and \ref{fig:analytical_vs_montecarolo_llr} imply that the overall bias is zero for dense networks and positive for sparse networks. Close examination of the bias expression in equation (\ref{eq:llr_bias_anlaysis}) also confirms that both $b_1$ and $b_2$ are positive.
First, we observe that
\begin{align}
\begin{split}
\forall r,s \in G \quad\quad \frac{1}{\omega_{rs}} - \EE{\frac{1}{\hat{\omega}_{rs}}} < \frac{1}{\textstyle\sum_{s \in G} \omega_{rs}} - \EE{\frac{1}{\textstyle\sum_{s \in G} \hat{\omega}_{rs}}} \\
\forall r,s \in G \quad\quad \frac{1}{\omega_{sr}} - \EE{\frac{1}{\hat{\omega}_{sr}}} < \frac{1}{\textstyle\sum_{s \in G} \omega_{sr}} - \EE{\frac{1}{\textstyle\sum_{s \in G} \hat{\omega}_{sr}}}
\end{split}
\label{eq:llr_bias_anlaysis2}
\end{align}
since the Jensen gaps are larger in magnitude when the Poisson random variable in the denominator has a lower expected value. Inequalities (\ref{eq:llr_bias_anlaysis2}) imply that
\begin{align}
b_2 > 0
\end{align}
To evaluate $b_1$, we start by examining the expectations in equation (\ref{eq:llr_bias_anlaysis}).
\begin{align}
\begin{split}
\EE{\frac{1}{\hat{\theta}_i^o \textstyle\sum_{g \in G} \hat{\omega}_{g_i g}}} -
\EE{\frac{1}{\hat{\theta}_i^o \hat{\omega}_{g_i g}}} &= 
\EE{\frac{d_{g_i}^o}{d_i^o\textstyle\sum_{g \in G} m_{g_i g}}} - 
\EE{\frac{d_{g_i}^o}{d_i^o m_{g_i g}}} \\
&= \EE{\frac{1}{d_i^o}} -
\EE{\frac{\textstyle\sum_{s \in G} m_{g_is}^o}{d_i^o m_{g_i g}}} \\
&= \EE{\frac{1}{d_i^o}} -
\EE{\frac{1}{d_i^o} + \frac{\textstyle\sum_{s \neq g} m_{g_is}^o}{d_i^o m_{g_i g}}} \\
&= \EE{\frac{\textstyle\sum_{s \neq g} m_{g_is}^o}{d_i^o m_{g_i g}}} \\
&> 0
\end{split}
\label{eq:llr_bias_anlaysis2}
\end{align}
where the estimators correspond to DCSBM maximum likelihood, $d_{g_i}^o = \sum_{g \in G} m_{g_i g}$ is the total out-degree of all nodes in $g_i$ or the group node $i$ belongs to, and $d_i^o$ is the total out-degree of node $i$. There is a similar result for incoming edges:
\begin{align}
\begin{split}
\EE{\frac{1}{\hat{\theta}_i^i \textstyle\sum_{g \in G} \hat{\omega}_{g g_i}}} -
\EE{\frac{1}{\hat{\theta}_i^i \hat{\omega}_{g g_i}}} &= \EE{\frac{\textstyle\sum_{s \neq g} m_{s g_i}^i}{d_i^i m_{g g_i}}} \\
&> 0
\end{split}
\label{eq:llr_bias_anlaysis3}
\end{align}
Using equations (\ref{eq:llr_bias_anlaysis2}) and (\ref{eq:llr_bias_anlaysis3}) in evaluation of $b_1$ in (\ref{eq:llr_bias_anlaysis}), we conclude
\begin{align}
b_1 > 0
\end{align}
Thus, we have established that LLR expected value estimated through a plug-in estimator with model parameters underestimates true LLR expected value, leading to potential high type I error rate.
\begin{align}
    \EE{\hat{\lambda}} - \widehat{\EE{\hat{\lambda}}} > 0
\end{align}

%The overall bias introduced in the first two lines of equation \ref{eq:llr_bias_anlaysis} is negative matching the observations in figures \ref{fig:analytical_vs_montecarolo_llr_large} and \ref{fig:analytical_vs_montecarolo_llr}.
\end{document}